\def\cF {\mathcal{F}}
\def\cH {\mathcal{H}}
\def\cR {\mathcal{R}}
\def\cW {\mathcal{W}}
\def\bN {\mathbb{N}}
\def\bV {\mathbb{V}}
\def\bZ {\mathbb{Z}}
\def\mfield {\left(k(x,y), \, (\xi_x, \eta_y)\right)}
\def\si {\sigma}
\def\de {\delta}
\def\spa {\operatorname{span}}
\def\den {{\operatorname{den}} }
\def\num { {\operatorname{num}}}
\def\wrt {with respect to}
\def\pa{\partial}
\newtheorem{theorem}{Theorem}[section]
\newtheorem{proposition}[theorem]{Proposition}
\newtheorem{lemma}[theorem]{Lemma}
\newtheorem{remark}[theorem]{Remark}
\newtheorem{define}[theorem]{Definition}
\newtheorem{example}[theorem]{Example}
\begin{document}

\begin{frontmatter}

\title{On the Existence of Telescopers for \\Mixed Hypergeometric Terms}

%
%
%
%
%

%
\author{Shaoshi Chen$^a$, \ Fr\'ed\'eric Chyzak$^b$}
\ead{schen@amss.ac.cn, frederic.chyzak@inria.fr}
\author{Ruyong Feng$^c$,\   Guofeng Fu$^c$,\  Ziming Li$^c$}
\ead{ryfeng@amss.ac.cn, fuguofeng@mmrc.iss.ac.cn, zmli@mmrc.iss.ac.cn}

\medskip
\address{$^a$Department of Mathematics, North Carolina State University, 27695-8205 Raleigh, USA}
\address{$^b$INRIA, 91120 Palaiseau, France}
\address{$^c$Key Laboratory of Math.-Mech., Chinese Academy of Sciences, 100190 Beijing, China}

\thanks{In this work, S.\ Chen was supported by
the NSF grant CCF-1017217.
F.\ Chyzak was supported in part by the Microsoft Research-INRIA Joint Centre.
G.\ Fu, R.\ Feng and Z.\ Li were supported in part by two grants of NSFC No.\ 60821002/F02 and No.\ 10901156.
}

\begin{abstract}
We present a criterion for the existence of telescopers for
mixed hypergeometric terms, which is based on multiplicative and additive decompositions.
The criterion enables us
to determine the termination of Zeilberger's algorithms for
mixed hypergeometric inputs.
\end{abstract}

\begin{keyword}
Creative telescoping\sep
Zeilberger's algorithms\sep
Existence criteria.
\end{keyword}

\end{frontmatter}

\section{Introduction}\label{SECT:intro}

Given a sum~$U_n := \sum_{k=0}^\infty u_{n,k}$ to be computed,
\emph{creative telescoping\/} is a process that determines a
recurrence in~$n$ satisfied by the univariate sequence~$U=(U_n)$ from
a system of recurrences in $n$ and~$k$ satisfied by the bivariate
summand~$u=(u_{n,k})$.  A natural counterpart exists for integration.
Algorithmic research on this topic has been initiated by Zeilberger in
the early 1980s, leading in the 1990s to algorithms for summands and
integrands described by first-order equations, that is for
hypergeometric and hyperexponential creative telescoping
\citep{Zeilberger1990c,Zeilberger1991,Almkvist1990}.

The termination problem of Zeilberger's algorithms has been
extensively studied in the last two decades
\citep{WilfZeilberger1992,AbramovLe2002,Abramov2003,ChenHouMu2005} and
can be related to existence problems for other operations, like the
computation of diagonals \citep{Lipshitz1988}.
The main output of creative telescoping is the recurrence on the
sum~$U$.  It is called a \emph{telescoper\/} of~$u$.
Zeilberger's algorithms terminate if and only if telescopers exist, whence the
interest to discuss their existence.  \cite{Zeilberger1990} shows that
holonomicity, a notion borrowed from the theory of D-modules,
implies the existence of telescopers. In particular,
\cite{Wilf1992} present an elementary proof, based on the ideas
of~\cite{Fasenmyer1947} and~\cite{Verbaeten1974}, that telescopers
always exist for proper hypergeometric terms. However, holonomicity is
merely a sufficient condition, i.e., there are cases in which the
input functions are not holonomic but Zeilberger's algorithms still
terminate, see~\cite{Chyzak2009}. Therefore, a challenging problem is
to find a necessary and sufficient condition that enables us to
determine the existence of telescopers.

In view of the theoretical difficulty, special attention has been
focused on the subclass of hypergeometric terms, hyperexponential
functions, and mixed hypergeometric terms (see the definition in
Section~\ref{SUBSECT:mht}). In the continuous case, the results
by~\cite{Bernstein1971}, \cite{Kashiwara1978}, \cite{Lipshitz1988}
and~\cite{Takayama1992} show that every hyperexponential function
has a telescoper. This implies that Zeilberger's algorithms always
succeed on hyperexponential inputs. However, the situation in other cases
turns out to be more involved. In the discrete case, the first
complete solution to the termination problem has been given
by~\cite{Le2001} and~\cite{AbramovLe2002}, by deciding whether
telescopers exist for a given bivariate rational sequence in the
($q$)-discrete variables~$y_1$ and~$y_2$. According to their criterion,
the rational sequence
\[f = \frac{1}{y_1^2+y_2^2}\]
has no telescoper.
The criterion has been extended to the general case of bivariate hypergeometric
terms by~\cite{Abramov2002b, Abramov2003}. He proved that a
hypergeometric term can be written as a sum of a hypergeometric-summable term
and a proper one if it has a telescoper, see~\cite[Theorem 10]{Abramov2003}. Similar results have been
obtained in the $q$-discrete case by~\cite{ChenHouMu2005}.

\citet{Almkvist1990} presented a continuous-discrete analogue of creative telescoping.
This analogue is useful in the study of orthogonal
polynomials~\cite[Chapters~10--13]{Koepf1998}.
In analogy with the discrete case, not all mixed hypergeometric terms
have telescopers. Therefore, an Abramov-like criterion is also needed in the mixed case.

In order to unify the various cases of mixed rational terms,
\cite{ChenSinger2012} recently presented a criterion that is based on
residues analysis for the existence of telescopers for bivariate
rational functions.
In the present paper, we give a criterion, Theorem~\ref{TH:criterion}, on the existence of telescopers
for mixed hypergeometric terms, including continuous-discrete, continuous-$q$-discrete and discrete-$q$-discrete
terms. Beside the termination problem of creative telescoping,
the criterion,  together with the
results in~\citep{Hardouin2008, Schneider2010},  can be used to determine
if indefinite sums and integrals satisfy (possibly nonlinear) differential
equations (see Example~\ref{EX:transcendental}).

The rest of this paper is organized as follows. An
algebraic setting for mixed hypergeometric terms
is described in Section~\ref{SECT:prelim}, and the existence problem of telescopers
is stated in Section~\ref{SECT:telescoper}.
We define the notion of exact terms, and introduce the additive decompositions
in Section~\ref{SECT:ad}.
The criterion for the existence of telescopers for mixed hypergeometric terms
is presented in Section~\ref{SECT:criteria}.  Based on the criterion, we develop
an algorithm for deciding the existence of telescopers,
and present a few examples in Section~\ref{SECT:algo}.

\section{Preliminaries}\label{SECT:prelim}

Throughout the paper, we let~$k$ be an algebraically closed field of characteristic zero, and~$q$ be a nonzero
element of~$k$. Assume further that~$q$ is not a root of unity. Let~$k(x, y)$ be the field
of rational functions in~$x$ and~$y$ over~$k$. For a nonzero element~$f \in k(x, y)$, the denominator
and numerator of~$f$ are denoted by~$\den(f)$ and~$\num(f)$, respectively.
They are two coprime polynomials in~$k[x, y]$. For a ring~$R$, $R^{*}$ stands for~$R \setminus \{0\}$.
The symbol~$\bN$ stands for the set of nonnegative integers.

This section contains five subsections. In \S\ref{SUBSECT:field}, we
describe a field that will serve as ground field in our subsequent
algebraic constructions, and we define a (noncommutative) ring
of Ore polynomials. In \S\ref{SUBSECT:mixedsys}, we define
the notion of mixed hypergeometric terms, and describe a (commutative) ring extension of the ground field.
The ring extension contains mixed hypergeometric terms that occur in the study of existence of telescopers.
In \S\ref{SUBSECT:compatible}, we recall from \citep{ChenFengFuLi2011}
the structure of compatible rational functions, which
leads to a multiplicative decomposition of mixed hypergeometric terms given
in~\S\ref{SUBSECT:mht}.
Finally, we define in~\S\ref{SUBSECT:split} the notion of split
rational functions and the notions of proper terms that are
meaningful in presence of two types of operators.

\subsection{Fields endowed with a pair of operators} \label{SUBSECT:field}
Let~$\de_x=\pa/\pa_x$ and~$\de_y=\pa/\pa_y$ be
the usual derivations with respect to~$x$ and~$y$, respectively. For an element~$f\in k(x, y)$, we define the shift
operators~$\si_x$ and~$\si_y$ as
\[\si_x(f(x, y)) = f(x+1, y)\quad \text{and}\quad \si_y(f(x, y)) = f(x, y+1),\]
and~$q$-shift operators~$\tau_x$ and~$\tau_y$ as
\[\tau_x(f(x, y)) = f(qx, y)\quad \text{and}\quad \tau_y(f(x, y)) = f(x, qy).\]

To describe the mixed cases concisely, we introduce the following notation.

\medskip \noindent
{\bf Notation.}
Let~$\Theta$ denote the set
\[\{\de_x, \si_x, \tau_x\} \times \{\de_y, \si_y, \tau_y\} \setminus \{(\de_x, \de_y), (\si_x, \si_y), (\tau_x, \tau_y)\}.\]
A pair~$(\xi_x, \eta_y)\in \Theta$ is called a~\emph{mixed\/} pair of operators.

\medskip
Note that, for every~$(\xi_x, \eta_y) \in \Theta$, $\xi_x$ and~$\eta_y$ act on variables of different types.
It follows that~$\xi_x \circ \eta_y (f) = \eta_y \circ \xi_x(f)$ for all~$f \in k(x,y)$.

In the sequel,~$k(x,y)$ is usually endowed with a mixed pair~$(\xi_x, \eta_y)$ of operators.
The resulting structure is denoted
as~$\left(k(x,y), \, (\xi_x, \eta_y) \right)$. There are six cases listed in Figure~\ref{FIG:fields}, in which
D-$\Delta$ stands for the differential and difference case, D-$\Delta_q$ for the differential and $q$-difference case, etc.
Of course, the last three cases can be identified with the first three when we swap~$x$ and~$y$ in the field~$k(x,y)$.
\begin{center}
\begin{figure}
\begin{center}
\begin{tabular}{|c|l|c|}
\hline
        Case                 & $(\xi_x, \eta_y)$ &  Field endowed with~$(\xi_x, \eta_y)$ \\    \hline
D-$\Delta$      & $(\de_x,\si_y)$        & $\left(k(x, y), (\delta_x, \si_y)\right)$ \\
D-$\Delta_q$  & $(\de_x,\tau_y)$       &  $\left(k(x, y), (\delta_x, \tau_y)\right)$ \\
$\Delta$-$\Delta_q$    & $(\si_x,\tau_y)$       & $\left(k(x, y), (\si_x, \tau_y)\right)$  \\ \hline
$\Delta$-D      & $(\si_x,\de_y)$        & $\left(k(x, y), (\si_x, \de_y)\right)$ \\
$\Delta_q$-D  & $(\tau_x,\de_y)$       &  $\left(k(x, y), (\tau_x, \de_y)\right)$ \\
$\Delta_q$-$\Delta$    & $(\tau_x,\si_y)$       & $\left(k(x, y), (\tau_x, \si_y)\right)$  \\
\hline
\end{tabular}
\end{center}
\medskip
\caption{Fields endowed with a mixed pair of operators} \label{FIG:fields}
\end{figure}
\end{center}

Given a field~$\left(k(x,y), (\xi_x, \eta_y)\right)$,
one can define a ring of Ore polynomials \citep{ChyzakSalvy1998},
which we denote here by~$k(x,y)\langle \pa_x, \pa_y \rangle$.
According to the different choices of~$(\xi_x, \eta_y)$, we illustrate the corresponding rings of
Ore polynomials in Figure~\ref{FIG:rings}.
In particular, the commutation rules for the explicit case~$f = x$ are
as follows:
\begin{equation*}
D_x x = x D_x + 1, \
S_x x = (x + 1) S_x, \
T_x x = q x T_x, \
D_y x = x D_y, \
S_y x = x S_y, \
T_y x = x T_y.
\end{equation*}

\begin{center}
\begin{figure}
\begin{center}
\begin{tabular}{|c|l|c|c|}
\hline
        Case                 & $(\xi_x, \eta_y)$ & Ring $k(x,y)\langle \pa_x, \pa_y \rangle$  & Commutation rules, where~$f \in k(x,y)$ \\    \hline
D-$\Delta$      & $(\de_x,\si_y)$        & $k(x, y) \langle D_x, S_y\rangle$ &  $D_x f = f D_x + \delta_x(f)$
and $S_y f = \si_y(f) S_y$  \\
D-$\Delta_q$  & $(\de_x,\tau_y)$       & $ k(x, y)\langle D_x, T_y\rangle$ &  $D_x f = f D_x + \delta_x(f)$ and $T_y f = \tau_y(f) T_y$  \\
$\Delta$-$\Delta_q$    & $(\si_x,\tau_y)$       & $k(x, y)\langle S_x, T_y\rangle$ & $S_x f = \si_x(f) S_x$ and $T_y f = \tau_y(f) T_y$   \\ \hline
$\Delta$-D      & $(\si_x, \de_y)$        & $k(x, y) \langle S_x, D_y\rangle$ &   $S_x f = \si_x(f) S_x $ and  $D_y f = f D_y + \delta_y(f)$ \\
$\Delta_q$-D  & $(\tau_x, \de_y)$       & $ k(x, y)\langle T_x, D_y\rangle$ &  $T_x f = \tau_x(f) T_x$ and $D_y f = f D_y + \delta_y(f)$   \\
$\Delta_q$-$\Delta$    & $(\tau_x, \si_y)$       & $k(x, y)\langle T_x, S_y\rangle$ & $T_x f = \tau_x(f) T_x$ and  $S_y f = \si_y(f) S_y$    \\
\hline
\end{tabular}
\end{center}
\caption{Rings of Ore polynomials} \label{FIG:rings}
\end{figure}
\end{center}

\subsection{First-order mixed systems} \label{SUBSECT:mixedsys}
A first-order mixed linear-functional system is of the form
\begin{equation} \label{EQ:mixedsys}
\left\{ \begin{array}{rcl}
\xi_x(z) & = a z, \\
\eta_y(z) & = b z,
\end{array} \right.
\end{equation}
where~$(\xi_x, \eta_y) \in \Theta$ and~$a, b \in k(x,y)$.
For brevity, we call~\eqref{EQ:mixedsys} a {\em first-order mixed system\/} or a {\em mixed system\/} in the sequel.
\begin{example} \label{EX:zerodivisor}
Let~$(\xi_x, \eta_y)=(\delta_x, \si_y)$, $a=y/x$ and~$b=-x$. The system~\eqref{EQ:mixedsys} becomes
\[
\left\{ \begin{array}{rcl}
\delta_x(z) & = \frac{y}{x} \, z,  \\
\si_y(z) & = -x \,  z.
\end{array} \right.
\]
It is straightforward to verify that the expression $(-x)^y$ solves this mixed system.
Moreover, this system does not have any nonzero rational solution in~$k(x, y)$: if
it had, we could write such a solution in the form
\[ z =\frac{P}{Q} = \frac{p_my^m + \cdots + p_0}{q_n y^n + \cdots + q_0}, \quad \text{where~$p_i$ and~$q_j$ are in~$k(x)$ with~$p_mq_n\neq 0$.}\]
By the equality~$\si_y(z) = -x z$, we have~$\si_y(P)Q = -x P \si_y(Q)$. Equating the leading coefficients with respect
to~$y$ yields~$p_m q_n = -xp_mq_n$, which further implies that~$x=-1$. This is a contradiction with the assumption that~$x$ is
transcendental over~$k$.
\end{example}

The example above shows that solving generally requires to extend the
filed~$k(x,y)$.  This motivates us to consider ring extensions
of~$k(x,y)$ endowed with a mixed pair of operators.
\begin{define} \label{DEF:ringext}
For a pair~$(\xi_x, \eta_y) \in \Theta$,
we call a triple~$\left(R, (\bar{\xi}_x, \bar{\eta}_y) \right)$ a {\em ring extension\/}
of~$\left(k(x,y), (\xi_x, \eta_y)\right)$ if the following conditions are satisfied.
\begin{enumerate}
\item[(i)] $R$ is a commutative ring containing~$k(x,y)$.
\item[(ii)] $\bar{\xi}_x: R \rightarrow R$ is an extension of~$\xi_x$, and
$\bar{\eta}_y: R \rightarrow R$ is an extension of~$\eta_y$.
\item[(iii)] $\bar{\xi}_x$ is a derivation on~$R$ if~$\xi_x =\de_x$, and~it is a monomorphism
if~$\xi_x=\si_x$ or~$\xi_x = \tau_x$.
\item[(iv)] $\bar{\eta}_y$ is a derivation on~$R$ if~$\eta_y =\de_y$, and~it is a monomorphism
if~$\eta_y=\si_y$ or~$\eta_y = \tau_y$.
\item[(v)] $\bar{\xi}_x$ and~$\bar{\eta}_y$ commute.
\end{enumerate}
Moreover, such a  ring extension is said to be {\em simple\/} if there does not exist any
ideal~$I$ of~$R$ such that~$\bar{\xi}_x(I) \subset I$ and~$\bar{\eta}_y(I) \subset I$ except for~$I=R$ and~$I=\{0\}$.
\end{define}

Without any possible ambiguity, we shall denote the operators
$\bar{\xi}_x$ and~$\bar{\eta}_y$ obtained as in the definition above
by~$\xi_x$ and~$\eta_y$, respectively.
The reader may find more general ring extensions endowed with derivations, shift and $q$-shift operators
in~\citep{Hardouin2008}.

Let~$L=\sum_{i,j} a_{i,j} \pa_x^i \pa_y^j$ be an Ore polynomial in~$k(x,y)\langle \pa_x, \pa_y\rangle$,
where~$k(x,y)$ is endowed with a mixed pair~$(\xi_x, \eta_y)$ of operators.
Let~$(R, (\xi_x, \eta_y))$ be a ring extension of~$(k(x,y), (\xi_x, \eta_y))$.
We define the application of~$L$ to an element~$r \in R$ to be
\[ L(r) = \sum_{i,j}  a_{i,j} \xi_x^i \circ \eta_y^j (r). \]
It is straightforward to verify that, for~$L_1, L_2 \in k(x,y)\langle \pa_x, \pa_y\rangle$,
\[ L_1 (L_2(r)) = (L_1 L_2)(r) \quad \mbox{for all~$r \in R.$} \]

We are about to define the constants of a given field~$\mfield$, by
describing them in a uniform way as the solutions of specific
operators.  By application of elements in~$k(x,y)\langle \pa_x, \pa_y
\rangle$, we have
\begin{equation}\label{DEF:delta}
\Delta_y := \pa_y - \pa_y(1) = \left\{ \begin{array}{ll}
                          D_y & \mbox{if $\eta_y = \delta_y$,} \\
                          S_y - 1 & \mbox{if $\eta_y = \si_y$,} \\
                          T_y - 1 & \mbox{if $\eta_y = \tau_y$.} \\
                          \end{array} \right.
\end{equation}

The same holds when we replace~$y$ by~$x$ and~$\eta_y$ by~$\xi_x$, and
we define~$\Delta_x$ similarly.
An element~$c \in R$ is then called a {\em constant\/} with respect
to the pair~$(\xi_x, \eta_y)$ if
\[    \Delta_x(c) = \Delta_y(c) = 0. \]
One can easily verify that~$c \in k(x,y)$ is a constant with respect to~$(\xi_x, \eta_y)$ if and only if~$c$ is an element of~$k$.

Given a system of the form~\eqref{EQ:mixedsys}, a basic question is
whether there exists a ring extension~$(R, (\xi_x, \eta_y))$
containing a nonzero solution of the system.  This question is related
to compatibility conditions of~\eqref{EQ:mixedsys}, which we discuss
in the next section.

\subsection{Compatible rational functions} \label{SUBSECT:compatible}
Let~$k(x,y)$ be endowed with a mixed pair~$(\xi_x, \eta_y)$ of operators.
If the first-order mixed system~\eqref{EQ:mixedsys} has a nonzero solution~$h$ in a ring
extension~$(R, (\xi_x, \eta_x))$, by the commutativity of~$\xi_x$ and~$\eta_y$, we
have~$\eta_y \circ \xi_x(h)= \xi_x \circ \eta_y(h)$. It follows from~\eqref{EQ:mixedsys} that
\[
\eta_y(a h ) = \xi_x(b h), \quad
a \neq 0 \,\, \mbox{if~$\xi_x \neq \delta_x$}, \,\, \text{and} \,\,
b \neq 0 \,\, \mbox{if~$\eta_y \neq \de_y$.}
\]

The above three conditions lead to six pairs of compatibility
conditions listed in Figure~\ref{FIG:compatible}.  These are necessary
conditions for System~\eqref{EQ:mixedsys} to possess a solution.

\begin{center}
\begin{figure}
\begin{center}
\begin{tabular}{|c|l|c|}
\hline
        Case                 & $(\xi_x, \eta_y)$ &  Compatibility condition \\    \hline
D-$\Delta$      & $(\de_x,\si_y)$        & $\frac{\de_x(b)}{b}= \si_y(a)-a$ and $b \neq 0$ \\
D-$\Delta_q$  & $(\de_x,\tau_y)$       & $\frac{\de_x(b)}{b}= \tau_y(a)-a$ and $b \neq 0$ \\
$\Delta$-$\Delta_q$    & $(\si_x,\tau_y)$       & $\frac{\si_x(b)}{b} = \frac{\tau_y(a)}{a}$ and $ab \neq 0$  \\ \hline
$\Delta$-D      & $(\si_x,\de_y)$        & $\si_x(b)-b = \frac{\de_y(a)}{a}$ and  $a \neq 0$ \\
$\Delta_q$-D  & $(\tau_x,\de_y)$       &  $\tau_x(b)-b = \frac{\de_y(a)}{a}$ and $a \neq 0$ \\
$\Delta_q$-$\Delta$    & $(\tau_x,\si_y)$       & $\frac{\tau_x(b)}{b} = \frac{\si_y(a)}{a}$ and  $ab \neq 0$ \\
\hline
\end{tabular}
\end{center}
\caption{Compatibility conditions} \label{FIG:compatible}
\end{figure}
\end{center}

\begin{center}
\begin{figure}
\begin{center}
\begin{tabular}{|c|l|l|}
\hline
        Case                 & $(\xi_x, \eta_y)$ &  $a, b  \in k(x,y)$ compatible w.r.t.~$(\xi_x, \eta_y)$  \\    \hline
D-$\Delta$      & $(\de_x,\si_y)$        &
$\begin{array}{c}
\text{There exist~$f \in k(x,y)$, $\alpha, \beta \in k(x)$, and~$\gamma \in k(y)$ such that} \\
a = \frac{\de_x(f)}{f} + y \frac{\de_x(\beta)}{\beta} + \alpha \quad \text{and} \quad b= \frac{\si_y(f)}{f} \cdot \beta \cdot \gamma.
\end{array}$
\\
D-$\Delta_q$  & $(\de_x,\tau_y)$       &
$\begin{array}{c}
\text{There exist~$f \in k(x,y)$, $\alpha \in k(x)$, and~$\beta \in k(y)$ such that} \\
a = \frac{\de_x(f)}{f} + \alpha \quad \text{and} \quad b= \frac{\tau_y(f)}{f} \cdot \beta.
\end{array}$
\\
$\Delta$-$\Delta_q$    & $ (\si_x,\tau_y)$       &
$\begin{array}{c}
\text{There exist~$f \in k(x,y)$, $\alpha \in k(x)$, and~$\beta \in k(y)$ such that}\\
a = \frac{\si_x(f)}{f} \cdot \alpha \quad \text{and} \quad b= \frac{\tau_y(f)}{f} \cdot \beta.
\end{array}
$
\\
\hline
$\Delta$-D      & $(\si_x,\de_y)$        &
$\begin{array}{c}
\text{There exist~$f \in k(x,y)$, $\alpha, \beta \in k(y)$, and~$\gamma \in k(x)$ such that} \\
a = \frac{\si_x(f)}{f} \cdot \beta \cdot \gamma \quad \text{and} \quad b= \frac{\de_y(f)}{f} + x \frac{\de_y(\beta)}{\beta} + \alpha.
\end{array}$
\\
$\Delta_q$-D  & $(\tau_x,\de_y)$       &
$\begin{array}{c}
\text{There exist~$f \in k(x,y)$, $\alpha \in k(y)$, and~$\beta \in k(x)$ such that} \\
a = \frac{\tau_x(f)}{f} \cdot \beta \quad \text{and} \quad b= \frac{\de_y(f)}{f} + \alpha.
\end{array}$
\\
$\Delta_q$-$\Delta$    & $(\tau_x,\si_y)$       &
$\begin{array}{c}
\text{There exist~$f \in k(x,y)$, $\alpha \in k(y)$, and~$\beta \in k(x)$ such that} \\
a = \frac{\tau_x(f)}{f} \cdot \beta \quad \text{and} \quad b= \frac{\si_y(f)}{f} \cdot \alpha.
\end{array}$
\\
\hline
\end{tabular}
\end{center}
\caption{Structure of compatible rational functions} \label{FIG:structure}
\end{figure}
\end{center}

\begin{define} \label{DEF:comprat}
Let~$a, b \in k(x,y) \times k(x,y)$ and~$(\xi_x, \eta_y) \in \Theta$. We say
that~$a$ and~$b$ are {\em compatible\/} with respect to~$(\xi_x, \eta_y)$ if
the compatibility conditions corresponding to~$(\xi_x, \eta_y)$ in Figure~\ref{FIG:compatible}
are satisfied.
\end{define}

Theorem~1 in~\citep{ChenFengFuLi2011} describes the special structure of compatible rational functions.
 Applying this theorem to the six cases in Figure~\ref{FIG:compatible},
we obtain Figure~\ref{FIG:structure}, which describes the structure of compatible bivariate rational functions.
In fact, the conclusions in Figure~\ref{FIG:structure} can also be derived from Lemmas~3.1, 3.2 and 3.3
in~\citep{ChenFengFuLi2011}, because we are only concerned with bivariate compatible rational functions.

A first-order mixed system of the form~\eqref{EQ:mixedsys} is said to be {\em compatible\/}
if its coefficients~$a$ and~$b$ are compatible with respect to~$(\xi_x, \eta_y)$.
A first-order mixed system must be compatible with respect to~$(\xi_x, \eta_y)$ if
it has a nonzero solution in some ring extension of~$\left(k(x,y), (\xi_x, \eta_y) \right).$

\subsection{Mixed hypergeometric terms} \label{SUBSECT:mht}

Hypergeometric terms are a common abstraction of geometric terms and
factorials.  They play an important role in combinatorics. The
continuous analogue of hypergeometric terms is hyperexponential
functions: they generalize usual exponential functions and simple
radicals. In this paper, we will consider a class of functions in~$x$
and~$y$ that are solutions of first-order mixed systems, and are
therefore intermediate objects between hypergeometric terms and
hyperexponential functions.

Given a compatible mixed system of the form~\eqref{EQ:mixedsys},
Theorem 2 in~\citep{Bronstein2005} implies that
there exists a simple ring extension~$\left(\cR, (\xi_x, \eta_y)\right)$ of~$\left(k(x, y), (\xi_x, \eta_y) \right)$ containing a nonzero solution
of~\eqref{EQ:mixedsys}.
Such a simple ring is called a \emph{Picard--Vessiot extension\/}
associated to~\eqref{EQ:mixedsys}.
Moreover, the set of constants in~$\cR$ is equal to~$k$ due to the assumption that~$k$ is algebraically
closed. These facts allow us to define the notion of mixed hypergeometric terms in a ring setting as follows.
\begin{define} \label{DEF:mhyper}
Let~$k(x,y)$ be a field endowed with a mixed pair~$(\xi_x, \eta_y)$ of operators.
Assume that~$\left(\cR, (\xi_x, \eta_y)\right)$ is a simple ring extension of~$\left(k(x,y), (\xi_x, \eta_y)\right)$,
and that the set of constants in~$\cR$ is equal to~$k$.  A \emph{nonzero} element~$h$ of~$\cR$
is called a {\em mixed hypergeometric term\/} over~$\mfield$ if there exist~$a, b \in k(x,y)$ such that
$$\xi_x(h)=ah \quad  \text{and} \quad \eta_y(h)=bh.$$
We call~$a$ the {\em certificate of\/~$h$ with respect to\/~$\xi_x$}, and~$b$ the {\em certificate with respect to\/~$\eta_y$}.
\end{define}
For brevity, a mixed hypergeometric term will be called a mixed term in the sequel.

%
\begin{lemma} \label{LM:mhyper} Let the ring extension~$\left(\cR, (\xi_x, \eta_y)\right)$ be given as in Definition~\ref{DEF:mhyper}.
\begin{enumerate}
\item[(i)] Every mixed term is invertible.
\item[(ii)] If two mixed terms have the same certificates, then their ratio belongs to~$k$.
\end{enumerate}
\end{lemma}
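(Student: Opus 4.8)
The plan is to exploit the simplicity of the ring extension $\cR$ together with the defining multiplicative relations $\xi_x(h)=ah$ and $\eta_y(h)=bh$. For part (i), the idea is that the annihilator of a mixed term $h$ should be an ideal stable under both operators, and simplicity forces it to be trivial. More precisely, I would first consider the set $I=\{r\in\cR : rh=0\}$, the annihilator of $h$. I would check that $I$ is an ideal of the commutative ring $\cR$, and then verify that it is stable under $\xi_x$ and $\eta_y$. The stability is where the relations come in: applying $\xi_x$ to $rh=0$ and using the product rule (in the derivation case) or the multiplicativity (in the shift/$q$-shift case) together with $\xi_x(h)=ah$ should show that $\xi_x(r)h$ is again zero, i.e.\ $\xi_x(r)\in I$; similarly for $\eta_y$. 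Since $h\neq 0$, the annihilator $I$ is a proper ideal, so by simplicity $I=\{0\}$. Then I would form the principal ideal $(h)=\cR h$; the relations $\xi_x(h)=ah\in\cR h$ and $\eta_y(h)=bh\in\cR h$ show $(h)$ is stable under both operators (using again the product/multiplicativity rules and that $a,b\in k(x,y)\subset\cR$), so by simplicity $\cR h=\cR$, giving an inverse of $h$.

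For the derivation case one must be slightly careful: if $\xi_x=\delta_x$ is a derivation, then from $rh=0$ one gets $\xi_x(r)h+r\,\xi_x(h)=0$, hence $\xi_x(r)h=-r(ah)=-a(rh)=0$, so $\xi_x(r)\in I$ as needed. In the monomorphism case $\xi_x(rh)=\xi_x(r)\xi_x(h)=\xi_x(r)\,ah$, and since $\xi_x(rh)=\xi_x(0)=0$ and $a$ is a unit (in the shift/$q$-shift columns the compatibility conditions force $a\neq 0$, and $k(x,y)^*$ consists of units of $\cR$), we again get $\xi_x(r)h=0$. The same argument, with the roles symmetric, handles $\eta_y$. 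Thus the annihilator is $\xi_x$- and $\eta_y$-stable in every case, and the simplicity hypothesis applies uniformly.

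For part (ii), suppose $h_1,h_2$ are mixed terms with the same certificates, so $\xi_x(h_i)=ah_i$ and $\eta_y(h_i)=bh_i$ for $i=1,2$. By part (i), $h_2$ is invertible, so I can form $c=h_1 h_2^{-1}\in\cR$. The strategy is to show $c$ is a constant, i.e.\ $\Delta_x(c)=\Delta_y(c)=0$, and then invoke the hypothesis that the constants of $\cR$ equal $k$. In the monomorphism case, $\eta_y(c)=\eta_y(h_1)\eta_y(h_2)^{-1}=(bh_1)(bh_2)^{-1}=h_1h_2^{-1}=c$, so $\Delta_y(c)=(\eta_y-1)(c)=0$; likewise for $\xi_x$. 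In the derivation case I would differentiate $c h_2=h_1$, giving $\xi_x(c)h_2+c\,\xi_x(h_2)=\xi_x(h_1)$, i.e.\ $\xi_x(c)h_2+c\,ah_2=ah_1=a\,ch_2$, and cancelling the invertible $h_2$ yields $\xi_x(c)=0$, hence $\Delta_x(c)=D_x(c)=0$. Combining the two operator computations shows $c$ is a constant, so $c\in k$, which is the claim.

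The routine verifications (that $I$ is an ideal, that $k(x,y)^*\subset\cR^*$, and the case-by-case bookkeeping over the six entries of Figure~\ref{FIG:fields}) are mechanical. The main conceptual point — and the step I would watch most carefully — is establishing that the annihilator ideal is genuinely stable under \emph{both} operators simultaneously, since that is what lets simplicity do its work; the derivation case requires the product rule while the monomorphism case requires that the certificate $a$ (resp.\ $b$) be a unit, and one must confirm from the compatibility conditions in Figure~\ref{FIG:compatible} that this nonvanishing indeed holds in precisely the shift and $q$-shift situations.
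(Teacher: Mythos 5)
Your proof is correct and follows essentially the same route as the paper: part (i) via the $\xi_x$- and $\eta_y$-stability of the principal ideal $\cR h$ together with simplicity, and part (ii) by the direct computation showing that $h_1h_2^{-1}$ is a constant, hence in $k$. The preliminary annihilator-ideal step is a harmless redundancy, since invertibility already follows from $\cR h=\cR$ alone.
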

\begin{proof}
Let~$h$ be a mixed term in~$\cR$, and~$I$ be the ideal generated by~$h$ in~$\cR$.
Then~$\xi_x(h)$ and~$\eta_y(h)$ belong to~$I$. It follows that~$\xi_x(I) \subset I$ and~$\eta_y(I) \subset I$.
Since~$\cR$ is simple and~$h$ is nonzero, $I=\cR$, that is, $1 \in I$. Consequently, $h$ is invertible.
The first assertion holds.

Let~$h_1$ and~$h_2$ be two mixed terms in~$\cR$. If they have the same certificates, then~$h_1/h_2$ is a constant
by a straightforward calculation, that is to say, $h_1 = c h_2$ for some~$c \in k$.
The second assertion holds.
\end{proof}

Viewing mixed terms in an abstract ring allows us to compute their sums, products and inverses legitimately.
Moreover, we will never encounter any analytic considerations, such as singularities and the regions of definition.
This choice will
not do any harm, as the problem we are dealing with is purely algebraic.

Two mixed terms~$h_1$ and~$h_2$ are said to be~{\emph{similar\/}} if the ratio~$h_1/h_2$ is in~$k(x, y)$.
It is easy to verify that similarity is an equivalence relation.

When studying the existence of telescopers, we will encounter at most
finitely many mixed terms that are dissimilar to each other. These
terms can be regarded as elements in a simple ring, because a finite
number of Picard--Vessiot extensions associated to compatible
first-order mixed systems can be embedded into a simple ring \citep[\S
  2.2]{LiSingerWuZheng2006}.
From now on, we assume that~$\cR$ is given as in Definition~\ref{DEF:mhyper}.
It will be sufficient to consider mixed terms in~$\cR$.

By the second assertion of Lemma~\ref{LM:mhyper}, two mixed terms having the same certificates differ by a multiplicative
constant. These constants are irrelevant to the main result of this paper. So we introduce a notation to suppress them.

Let~$h$ be a mixed term in~$\cR$ with~$\xi_x$-certificate~$a$ and~$\eta_y$-certificate~$b$. Then~$a$ and~$b$ are compatible
with respect to~$(\xi_x, \eta_y)$ because of the commutativity of~$\xi_x$ and~$\eta_y$. Set
\[  \cH(a, b) = \{ c h | c \in k\}. \]
The set consists of zero and mixed terms in~$\cR$ whose respective certificates are~$a$ and~$b$.
Clearly,~$\cH(a,b)$ is a one-dimensional linear subspace over~$k$.
In the sequel, whenever the notation~$\cH(a,b)$ is used,~$a$ and~$b$ are assumed to be compatible rational functions,
and~$\cH(a, b) \subset \cR$.  In particular, for any rational function~$f \in k(x,y)$,
the set~$f \cH(a, b)$ is a subset of~$\cR$. Indeed, it is the one-dimensional linear subspace spanned by~$fh$.
Moreover, let~$h^\prime$ be another mixed term in~$\cR$, and
$a^\prime$ and~$b^\prime$~be the corresponding certificates.
We consider
\[ \cH(a, b) \cH(a^\prime, b^\prime) = \left\{ r r^\prime | r \in \cH(a, b), r^\prime \in \cH(a^\prime, b^\prime) \right\}, \]
which is equal to the one-dimensional linear subspace spanned by~$h h^\prime$. By the definition of certificates, we also have
\[   \xi_x(\cH(a,b))= a \cH(a,b) \quad \text{and} \quad  \eta_y (\cH(a, b)) = b \cH(a, b). \]
More rules for calculations with~$\cH(a, b)$ are given below.
\begin{lemma} \label{LM:hrule}
For a field~$\mfield$, we
let~$\cH(a,b)$ and~$\cH(a^\prime,  b^\prime)$  be given above, and $f$ be a nonzero rational function in~$k(x,y)$.
\begin{enumerate}
\item[(i)] If~$\xi_x = \delta_x$ and~$\eta_y \in \{\si_y, \tau_y\}$, then
\[  f \cH(a, b) = \cH \left(   a + \frac{\xi_x(f)}{f}, \,  b \frac{\eta_y(f)}{f} \right)
\quad \text{and} \quad
\cH(a, b) \cH(a^\prime, b^\prime)  = \cH(a + a^\prime, b b^\prime). \]
\item[(ii)] If both~$\xi_x$ and~$\eta_y$ are automorphisms, then
\[  f \cH(a, b) = \cH \left(   a \frac{\xi_x(f)}{f}, \,  b \frac{\eta_y(f)}{f} \right)
\quad \text{and} \quad  \cH(a, b) \cH(a^\prime, b^\prime)  = \cH(aa^\prime, b b^\prime). \]
\item[(iii)] If~$\xi_x \in \{\si_x, \tau_x\}$ and~$\eta_y= \delta_y,$ then
\[  f \cH(a, b) = \cH \left(   a \frac{\xi_x(f)}{f}, \,  b + \frac{\eta_y(f)}{f} \right)
\quad \text{and} \quad \cH(a, b) \cH(a^\prime, b^\prime)  = \cH(aa^\prime, b + b^\prime). \]
\end{enumerate}
\end{lemma}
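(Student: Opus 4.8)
The plan is to verify each identity directly from the definitions, exploiting that $\cH(a,b)$ is the one-dimensional $k$-space spanned by any fixed mixed term $h$ satisfying $\xi_x(h)=ah$ and $\eta_y(h)=bh$. Fix such an $h$, and let $h'$ be a mixed term with certificates $a'$ and $b'$. Since $f\neq 0$ and $h$ is invertible by Lemma~\ref{LM:mhyper}(i), the products $fh$ and $hh'$ are nonzero elements of~$\cR$, hence mixed terms. It therefore suffices to compute their certificates: once we know that $fh$ has the certificates appearing on the right-hand side of each claimed identity, the equality of sets follows because $f\cH(a,b)=\spa_k(fh)$, the set $\cH(\cdot,\cdot)$ on the right is $\spa_k$ of any mixed term with those certificates, and by Lemma~\ref{LM:mhyper}(ii) two mixed terms with identical certificates differ only by a factor in~$k$. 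The same reasoning applies to~$hh'$.

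The computation of certificates uses only two algebraic rules. When an operator $\theta\in\{\xi_x,\eta_y\}$ is a derivation, the Leibniz rule gives $\theta(fh)=\theta(f)h+f\,\theta(h)$; when $\theta$ is a monomorphism (the shift or $q$-shift case), multiplicativity gives $\theta(fh)=\theta(f)\,\theta(h)$. I treat the three cases in turn.

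For case~(i), $\xi_x=\delta_x$ is a derivation and $\eta_y$ is an automorphism. Then
\[
\xi_x(fh)=\xi_x(f)h+f\,(ah)=\Bigl(a+\tfrac{\xi_x(f)}{f}\Bigr)(fh),
\qquad
\eta_y(fh)=\eta_y(f)\,(bh)=\Bigl(b\,\tfrac{\eta_y(f)}{f}\Bigr)(fh),
\]
which identifies the certificates of $fh$ and yields the first identity. For the product, $\xi_x(hh')=(ah)h'+h(a'h')=(a+a')hh'$ and $\eta_y(hh')=(bh)(b'h')=(bb')hh'$, giving the second.

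Cases~(ii) and~(iii) are handled identically, replacing the Leibniz rule by multiplicativity wherever the relevant operator is an automorphism: in case~(ii) both operators are multiplicative, so $f$ and the certificates combine multiplicatively in both components; in case~(iii) the roles of the derivation and the automorphism are exchanged relative to case~(i). No genuine obstacle arises; the only point requiring care is the appeal to Lemma~\ref{LM:mhyper}, first to guarantee that $fh$ and $hh'$ are honest mixed terms (so that the $\cH$-notation applies), and then to conclude that matching certificates force equality of the one-dimensional spaces. Compatibility of the resulting certificate pairs is automatic, since $\xi_x$ and $\eta_y$ commute on~$\cR$.
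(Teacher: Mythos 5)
Your proof is correct and follows essentially the same route as the paper's: compute the certificates of $fh$ and $hh'$ via the Leibniz rule or multiplicativity as appropriate, then identify the one-dimensional spaces using Lemma~\ref{LM:mhyper}. The paper leaves the certificate computation as a "straightforward" verification, whereas you write it out and also make explicit the appeal to invertibility to ensure $fh$ and $hh'$ are genuine mixed terms; the substance is identical.
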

\begin{proof}
Let~$h$ be a mixed term in~$\cH(a,b)$. It is straightforward to verify that
the $\xi_x$-certificate and~$\eta_y$-certificate of~$fh$ are~$a + \xi_x(f)/f$ and~$b \eta_y(f)/f$,
respectively, if~$\xi_x$ is~$\delta_x$ and~$\eta_y$ is an automorphism in~$\{\xi_x, \eta_y\}$.
Let~$h \in \cH(a,b)$ and~$h^\prime \in \cH(a^\prime, b^\prime)$ with~$h h^\prime \neq 0$.
Then the $\xi_x$-certificate of~$hh^\prime$ is~$a + a^\prime$, and its $\eta_y$-certificate
is~$bb^\prime$. It follows that
\[ \cH(a, b) \cH(a^\prime, b^\prime)  = \cH(a + a^\prime, b b^\prime). \]
The first assertion holds.  The other two assertions can be proved in a similar way.
\end{proof}

\begin{lemma}\label{LM:similar}
Let~$h_1$ and~$h_2$ be two mixed terms over~$\mfield$.
If~$h_1$ and~$h_2$ are similar, then
\begin{enumerate}
  \item[(i)] $h_1+h_2$ is either equal to zero or similar to~$h_1$;
  \item[(ii)] for any~$L\in k(x, y)\langle \pa_x, \pa_y\rangle$,
      $L(h_1)$ is either equal to zero or similar to~$h_1$.
\end{enumerate}
\end{lemma}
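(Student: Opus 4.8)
The plan is to reduce both assertions to a single structural observation, namely that the $k(x,y)$-submodule $k(x,y)\,h_1$ of~$\cR$ is stable under the action of both~$\xi_x$ and~$\eta_y$, and then to invoke the invertibility of mixed terms from Lemma~\ref{LM:mhyper}(i) to rule out accidental vanishing.

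For part~(i), I would argue directly. Since~$h_1$ and~$h_2$ are similar and~$h_1$ is invertible by Lemma~\ref{LM:mhyper}(i), I set~$r = h_2/h_1 \in k(x,y)$, so that~$h_2 = r h_1$ and hence~$h_1 + h_2 = (1+r)\,h_1$. If~$1+r = 0$, then the sum is zero. Otherwise~$1+r$ is a nonzero rational function, so~$(1+r)\,\cH(a,b)$ is again of the form~$\cH(\cdot,\cdot)$ by Lemma~\ref{LM:hrule}; in particular~$(1+r)h_1$ is a nonzero mixed term, and its ratio with~$h_1$ equals~$1+r \in k(x,y)$. Thus~$h_1 + h_2$ is similar to~$h_1$.

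For part~(ii), the key step is to verify that applying either~$\xi_x$ or~$\eta_y$ to a rational multiple~$f h_1$ of~$h_1$ again yields a rational multiple of~$h_1$. Writing the certificates of~$h_1$ as~$a, b \in k(x,y)$, a short computation with the commutation rules of Figure~\ref{FIG:rings} shows that, for any~$f \in k(x,y)$, the element~$\xi_x(f h_1)$ equals~$(\xi_x(f)+fa)\,h_1$ when~$\xi_x = \delta_x$ and equals~$\si_x(f)\,a\,h_1$ (respectively~$\tau_x(f)\,a\,h_1$) when~$\xi_x$ is an automorphism, and analogously for~$\eta_y$. Consequently~$k(x,y)\,h_1$ is stable under both~$\xi_x$ and~$\eta_y$, hence under every composition~$\xi_x^i \circ \eta_y^j$, and therefore, being a $k(x,y)$-submodule, under the action of every Ore operator~$L = \sum_{i,j} a_{i,j}\pa_x^i\pa_y^j$. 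It follows that~$L(h_1) = g\,h_1$ for some~$g \in k(x,y)$. If~$g = 0$ then~$L(h_1) = 0$; otherwise~$g h_1$ lies in~$g\,\cH(a,b)$, which by Lemma~\ref{LM:hrule} is once more a space of mixed terms, so~$L(h_1)$ is a nonzero mixed term with~$L(h_1)/h_1 = g \in k(x,y)$, i.e.\ similar to~$h_1$.

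The computations are entirely routine; there is no genuine obstacle. The only point demanding care is the case distinction between the derivation and the automorphism forms of each operator, which is exactly what makes the stability of~$k(x,y)\,h_1$ hold uniformly across the six settings of Figure~\ref{FIG:fields}. The minor bookkeeping issue—ensuring that a nonzero rational multiple of a mixed term is again a genuine nonzero mixed term—is handled by the invertibility in Lemma~\ref{LM:mhyper}(i) together with the certificate rules of Lemma~\ref{LM:hrule}.
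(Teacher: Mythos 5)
Your proof is correct and follows essentially the same route as the paper: part~(i) via the identity $h_1+h_2=(1+r)h_1$ with $r=h_2/h_1\in k(x,y)$, and part~(ii) by observing that $k(x,y)\,h_1$ is stable under $\xi_x$ and $\eta_y$ (the paper phrases this as ``its successive derivatives and ($q$-)shifts are all similar to $h_1$''), hence under every operator in $k(x,y)\langle\pa_x,\pa_y\rangle$. Your version merely spells out the certificate computations and the zero/nonzero case distinction that the paper leaves implicit.
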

\begin{proof} Let~$r\in k(x, y)$ be equal to~$h_1/h_2$. Then the first assertion follows from
the equality~$h_1+h_2=(1+{1}/{r})h_1$. Since~$h_1$ is a mixed term,
its successive derivatives and ($q$-)shifts are all similar to~$h_1$. The second assertion holds.
\end{proof}
\begin{remark}
Let~$h, h_1, h_2$ be three mixed terms. If~$h=h_1+h_2$, then the three terms are similar.
This is because~$\xi_x(h)=\xi_x(h_1)+\xi_x(h_2)$ and~$\eta_y(h)=\eta_y(h_1)+\eta_y(h_2)$.
\end{remark}

The next example illustrates how a linear differential or recurrence operator applies to mixed terms.
\begin{example} \label{EX:apply}
Let us consider how to apply $D_x$ and~$S_x$ to~$r \cH(u, v)$ with~$r, u, v \in k(x, y)$.
First, $D_x(r \cH(u, v)) = \left(\delta_x(r) + r u\right) \cH(u, v).$
Putting~$L_1 = D_x + u$, we rewrite the above relation as~$D_x(r \cH(u, v)) = L_1(r) \cH(u, v)$.
An easy induction shows that
\[  D_x^i (r \cH(u,v)) = L_i(r) \cH(u, v), \]
where~$L_i \in k(x, y)\langle D_x \rangle$ whose coefficients have a common denominator
that divides some power of~$\den(u)$.  Moreover,
the denominator of~$L_i(r)$ divides~$(\den(u) \den(r))^{i+1}$.

Let~$M_i=\left(\prod_{j=0}^{i-1} \si_x^j(u) \right) S_x^i$ for~$i>0$.
Then
\[ S_x^i (r\cH(u, v)) = M_i(r) \cH(u,v). \]
So the denominator of~$M_i(r)$ divides the
product~$\den\left(\prod_{j=0}^{i-1} \si_x^j(u)\right) \cdot \den\left(\si_x^i(r)\right).$
\end{example}

\begin{center}
\begin{figure}
\begin{center}
\begin{tabular}{|c|l|c|}
\hline
        Case                 & $(\xi_x, \eta_y)$ &  Mixed term~$h$  \\    \hline
D-$\Delta$      & $(\de_x,\si_y)$        & $\exists\ f \in k(x,y), \alpha, \beta \in k(x), \gamma \in k(y), \,\,
h   \in f\cH\left(y \frac{\de_x(\beta)}{\beta} + \alpha, \beta  \gamma \right)$ \\
D-$\Delta_q$  & $(\de_x,\tau_y)$       & $ \exists\ f \in k(x,y), \alpha \in k(x), \beta \in k(y), \,\,
h \in  f \cH\left( \alpha, \beta \right)$  \\
$\Delta$-$\Delta_q$    & $ (\si_x,\tau_y)$       & $ \exists\ f \in k(x,y), \alpha \in k(x), \beta \in k(y), \,\,
h \in  f \cH\left( \alpha, \beta \right)$  \\ \hline
$\Delta$-D      & $(\si_x,\de_y)$        &   $\exists\ f \in k(x,y), \alpha, \beta \in k(y), \gamma \in k(x), \,\,
h \in f \cH \left(\beta  \gamma,   x \frac{\de_y(\beta)}{\beta} + \alpha \right) $ \\
$\Delta_q$-D  & $(\tau_x,\de_y)$       & $ \exists\ f \in k(x,y), \alpha \in k(y), \beta \in k(x), \,\,
h \in  f \cH(\beta, \alpha)$ \\
$\Delta_q$-$\Delta$    & $(\tau_x,\si_y)$       & $\exists\ f \in k(x,y), \alpha \in k(y), \beta \in k(x), \,\,h \in  f \cH(\beta, \alpha)$  \\
\hline
\end{tabular}
\end{center}
\caption{Decomposition of mixed terms} \label{FIG:decomp}
\end{figure}
\end{center}

According to the structure of compatible rational functions given in
Figure~\ref{FIG:structure}, we obtain a multiplicative decomposition
of mixed terms, which is described in Figure~\ref{FIG:decomp}.  In
each case of the figure, a mixed term~$h$ is re-expressed by a
\emph{structural decomposition\/} as~$h = f h^\prime$ (see
Definion~\ref{DEF:structure} below).  An interesting property of such
a decomposition is that the certificates of~$h^\prime$ are expressed
by \emph{univariate\/} functions~$\alpha, \beta$ and~$\gamma$, as
given in Figure~\ref{FIG:decomp}. This fact will be crucial to prove
our existence criterion for telescopers.

Let us verify the conclusion in the differential-difference case in Figure~\ref{FIG:decomp}.
Assume that~$a$ and~$b$ are the $\xi_x$-certificate and~$\eta_y$-certificate of~$h$, respectively.
According to
the conclusion given in the differential-difference case in Figure~\ref{FIG:structure},
there exist~$f\in k(x, y)$, $\alpha, \beta \in k(x)$, and~$\gamma \in k(y)$ such that
\[  a = \frac{\de_x(f)}{f} + y \frac{\de_x(\beta)}{\beta} + \alpha \quad \text{and} \quad
b = \frac{\si_y(f)}{f} \cdot \beta \cdot \gamma. \]
Thus, the conclusion in the first row of Figure~\ref{FIG:decomp} holds by the first assertion in Lemma~\ref{LM:hrule}.
The rest of conclusions in Figure~\ref{FIG:decomp} can be verified likewise.

\begin{remark}
In the case when both~$\xi_x$ and~$\eta_y$ are shift operators, the Ore--Sato theorem says that every hypergeometric term can be
decomposed into the product of a rational function and a factorial term, (see also Corollary 4 in~\citep{AbramovPetkovsek2002a}).
The criterion  for the existence of telescopers given by~\citet{Abramov2003} is based on the Ore--Sato theorem.
\end{remark}

\begin{define} \label{DEF:structure}
Let~$h$ be a mixed term over~$\mfield$, as given in Figure~\ref{FIG:decomp}.
We say that~$f h^\prime$ is a {\em structural decomposition\/} of~$h$ if~$f$ is a rational function given in the
row corresponding to~$(\xi_x, \eta_y)$ in Figure~\ref{FIG:decomp}, and $h^\prime$ is the mixed term such that~$h=f h^\prime$.
\end{define}

\subsection{Split polynomials} \label{SUBSECT:split}
In the discrete case, integer-linear polynomials are used
to describe the existence criterion for telescopers of hypergeometric terms.
Split polynomials play a similar role as integer-linear polynomials in the mixed cases.

\begin{define}\label{DEF:splitpoly}
A polynomial~$p\in k[x, y]$ is said to be~{\emph{split\/}} if it is of the form~$p_1(x)p_2(y)$
with~$p_1\in k[x]$ and~$p_2\in k[y]$.
More generally, a rational function~$r \in k(x, y)$ is said to be~{\emph{split\/}} if it is of the form~$r_1(x)r_2(y)$
with~$r_1\in k(x)$ and~$r_2\in k(y)$.
\end{define}
A rational function~$f \in k(x,y)$ can always be decomposed
as~$f_1(x)f_2(y)f_3(x, y)$, where~$f_1 \in k(x)$, $f_2 \in k(y)$ and
neither~$\num(f_3)$ nor~$\den(f_3)$ has split factors except
constants. We call~$f_1 f_2$ and~$f_3$ the \emph{split\/} and
\emph{non-split parts\/} of~$f$, respectively. Both are defined up to a
nonzero multiplicative constant.

\begin{remark} \label{RE:split}
For a polynomial~$p\in k[x, y]$, one may decide whether it is split by
comparing all monic normalized coefficients of~$p$ with
respect to~$y$: $p$~is split if and only if all those are equal.  In
an implementation, one would abort early as soon as a mismatch is
found.
\end{remark}

The next definition is fundamental for our existence criterion.
\begin{define} \label{DEF:properness}
Let~$h$ be a mixed term over~$\mfield$. Assume that~$f h^\prime$ is a structural decomposition of~$h$.
We say that~$h$ is {\em proper\/} if~$\den(f)$ is split.
\end{define}

The lemma below shows that the properness of a mixed term is independent of the choice of its structural decompositions.
\begin{lemma} \label{LM:den}
Let~$h=f_1h_1=f_2h_2$ be a mixed term over~$\mfield$, where~$f_1, f_2$ are in~$k(x,y)$.
Assume that~$f_1h_1$ is a structural decomposition of~$h$. Moreover, assume that~$h_1 \in \cH(a_1, b_1)$
and~$h_2 \in \cH(a_2, b_2)$.
Then
\[  \mbox{$\den(f_1)$ is split \quad \text{if and only if} \quad $\den(f_2)$ is split,} \]
provided that one of the conditions in the following table holds:

\medskip
\begin{center}
\begin{tabular}{|c|c|}  \hline
$(\xi_x, \eta_y) =(\delta_x, \si_y)$ & either~$\den(a_2)$ or~$b_2$ is split;\\ \hline
$(\xi_x, \eta_y) =(\delta_x, \tau_y)$ & either~$\den(a_2)$ or~$b_2$ is split;\\ \hline
$(\xi_x, \eta_y) =(\si_x, \delta_y)$ & either~$a_2$ or~$b_2$ is split; \\ \hline
$(\xi_x, \eta_y) = (\si_x, \tau_y)$ & either~$a_2$ or~$\den(b_2)$ is split;\\ \hline
$(\xi_x, \eta_y) =(\tau_x, \delta_y)$ & either~$a_2$ or~$\den(b_2)$ is split;\\ \hline
$(\xi_x, \eta_y) = (\tau_x, \si_y)$ & either~$a_2$ or~$b_2$ is split. \\ \hline
\end{tabular}
\end{center}
\medskip

\end{lemma}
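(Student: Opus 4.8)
The plan is to reduce the whole statement to a single claim about the rational function $g := f_2/f_1 \in k(x,y)$, namely that $g$ is split, and then to transfer splitness of denominators across the equality $f_2 = g f_1$. First I would note that $f_1 h_1 = f_2 h_2$ forces $g = h_1/h_2 \in k(x,y)$ and hence $\cH(a_1,b_1) = g\,\cH(a_2,b_2)$. Suppose for the moment that $g$ is split. Multiplying a rational function by a split one changes only its split part, since a split factor cannot meet the numerator or denominator of a non-split part; thus $f_2 = g f_1$ has the same non-split part as $f_1$ up to a nonzero constant. As $\den(f)$ is split precisely when the non-split part of $f$ is a polynomial (by the definition of the non-split part recalled in \S\ref{SUBSECT:split}), it follows that $\den(f_1)$ is split if and only if $\den(f_2)$ is split. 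Hence it suffices to prove that $g$ is split.

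To prove that $g$ is split, I would feed $\cH(a_1,b_1) = g\,\cH(a_2,b_2)$ into Lemma~\ref{LM:hrule} and read off, case by case, the resulting relations. In a differential direction ($\xi_x=\delta_x$ or $\eta_y=\delta_y$) the lemma yields $\delta(g)/g = (\text{certificate of } h_1) - (\text{certificate of } h_2)$, while in a shift or $q$-shift direction ($\phi \in \{\sigma_x,\sigma_y,\tau_x,\tau_y\}$) it yields $\phi(g)/g = (\text{certificate of } h_1)/(\text{certificate of } h_2)$. Because $f_1 h_1$ is a structural decomposition, Figure~\ref{FIG:decomp} shows that the certificate of $h_1$ in a shift direction is split and the certificate in a differential direction has split denominator. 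Combined with the hypothesis of the table, which in each row makes the corresponding certificate of $h_2$ (or its denominator) split, this gives in the differential case that $\den(\delta(g)/g)$ is split (the denominator of a difference of functions with split denominators is split), and in the shift case that the numerator or the denominator of $\phi(g)/g$ is split.

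The technical heart, and the step I expect to be the main obstacle, consists of two auxiliary facts showing that such data detects exactly the bivariate irreducible factors of $g$ (those involving both $x$ and $y$, equivalently the non-split irreducibles), so that their absence forces $g$ to be split. For a differential direction I would write the irreducible factorization $g = \prod_i p_i^{e_i}$ and compute that $\den(\delta_x(g)/g)$ equals, up to a constant, the product of those $p_i$ with $\deg_x p_i > 0$, each to the first power: since $p_i \nmid \delta_x(p_i)$ and distinct irreducibles contribute distinct poles, no cancellation occurs, so a split denominator leaves no bivariate factor. For a shift or $q$-shift direction I would argue along $\phi$-orbits: a bivariate irreducible $p$ has an infinite orbit $\{\phi^n(p)\}_{n \in \mathbb{Z}}$ of pairwise non-associate irreducibles, and the multiplicity of $\phi^n(p)$ in $\phi(g)/g$ is $e_{n-1}-e_n$, where $e_m$ denotes the multiplicity of $\phi^m(p)$ in $g$; a split numerator (resp.\ denominator) forces $e_{n-1}-e_n \le 0$ (resp.\ $\ge 0$) for every $n$, so $(e_n)$ is monotone, and a monotone sequence of finite support vanishes identically. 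In either case $g$ has no bivariate factor, hence is split, and the lemma follows.

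Two points will require care. The orbit argument needs bivariate irreducibles to be non-periodic under $\phi$: for $\phi=\tau_x$ or $\phi=\tau_y$ this is exactly where the standing assumption that $q$ is not a root of unity enters, ruling out relations $\phi^m(p)=c\,\phi^n(p)$ with $m\neq n$ (a comparison of $y$- or $x$-supports reduces this to $q^j=1$ for some nonzero $j$). And one must verify that controlling only the numerator or only the denominator of $\phi(g)/g$, rather than the whole quotient, already forces $g$ to be split; this is precisely what the monotonicity-plus-finite-support argument delivers, and it is what lets the weaker table hypotheses such as ``$\den(b_2)$ is split'' suffice.
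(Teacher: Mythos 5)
Your proof is correct and follows essentially the same route as the paper's: reduce the statement to showing that the ratio $f_1/f_2$ is split, then rule out non-split irreducible factors of that ratio using the logarithmic derivative in a differential direction and a maximal-shift (orbit) argument in a shift or $q$-shift direction, with the non-root-of-unity assumption on $q$ guaranteeing infinite orbits. Your monotonicity-of-multiplicities formulation is a slightly more explicit version of the paper's argument and correctly covers the table rows where only $\den(b_2)$ is assumed split, which the paper dispatches with ``a similar argument.''
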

\begin{proof}
Assume that~$(\xi_x, \eta_y) = (\de_x, \si_y)$. Then, by the conclusion in the first row of Figure~\ref{FIG:decomp},
there exist~$\alpha, \beta \in k(x)$ and~$\gamma\in k(y)$ such that
\[a_1 = y\frac{\de_x(\beta)}{\beta} + \alpha \quad \text{and} \quad b_1 = \beta \gamma.\]
To show the equivalence in the first assertion, it suffices to prove that~$f_1/f_2$ is split.

\medskip \noindent
{\em Case 1.} Assume that~$\den(a_2)$ is split. Since~$f_1h_1=f_2h_2$, we have
\[\frac{\de_x(f_1)}{f_1} + y\frac{\de_x(\beta)}{\beta} + \alpha  = \frac{\de_x(f_2)}{f_2} + a_2,\]
which implies that
\begin{equation}\label{EQ:logd}
\frac{\de_x(f_1/f_2)}{f_1/f_2} = a_2 -y\frac{\de_x(\beta)}{\beta} - \alpha
\end{equation}
The denominator of the right-hand side of~\eqref{EQ:logd}
is split, since~$\den(a_2)$ is split and~$\alpha, \beta$ are in~$k(x)$.
Suppose that~$p\in k[x, y]$ is a nontrivial and non-split irreducible factor of~$\num(f_1/f_2)\cdot \den(f_1/f_2)$.
Then~$p$ divides the denominator of the logarithmic derivative of~$f_1/f_2$, a contradiction to~\eqref{EQ:logd}. Thus, $f_1/f_2$ is split.

\medskip \noindent
{\em Case 2.}
Assume that~$b_2$ is split. Since~$f_1h_1=f_2h_2$, we have
\[\frac{\si_y(f_1)}{f_1} \beta \gamma  = \frac{\si_y(f_2)}{f_2}  b_2,\]
which implies that
\begin{equation}\label{EQ:shiftq}
\frac{\si_y(f_1/f_2)}{f_1/f_2} = \frac{b_2}{\beta \gamma}.
\end{equation}
Note that the right-hand side of~\eqref{EQ:shiftq} is split. Suppose that~$p\in k[x, y]$ is a nontrivial and non-split irreducible factor
of the polynomial~$\num(f_1/f_2)\cdot \den(f_1/f_2)$. Then there exists~$\ell \in \bN$ such that
\[\si_y^\ell(p)\mid  \num(f_1/f_2)\cdot \den(f_1/f_2) \quad \text{but} \quad \si_y^{\ell+1}(p)\nmid \num(f_1/f_2)\cdot \den(f_1/f_2). \]
This implies that~$\si_y^{\ell+1}(p)$ is a factor of either the numerator or the denominator of the rational function~${\si_y(f_1/f_2)}/({f_1/f_2})$,
a contradiction to~\eqref{EQ:shiftq}.
Again, $f_1/f_2$~is split.

A similar argument proves that the assertion holds in the cases that~$(\xi_x, \eta_y)=(\delta_x, \tau_y)$
and~$(\xi_x, \eta_y)=(\si_x, \tau_y)$. The assertion for other cases can be proved by interchanging
the roles of~$x$ and~$y$.
\end{proof}
\section{Telescopers for mixed hypergeometric terms} \label{SECT:telescoper}
The method of creative telescoping was first formulated and popularized in a series of papers by
Zeilberger and his collaborators in the early 1990's~\citep{Almkvist1990,  Zeilberger1990c, Zeilberger1990, Zeilberger1991, Wilf1992}.
To illustrate the idea of this method, we consider the problem of finding a linear recurrence equation
for the integral (if there exists one):
\[H(x) := \int_{0}^{+\infty} h(x, y) \, dy,\]
where~$h(x, y)$ is a mixed term over~$\left(k(x, y), \, (\si_x, \de_y)\right)$.
Suppose that all integrals occurring in the derivation below are well-defined.
The key step of creative telescoping tries to find a nonzero
linear recurrence operator $L(x, S_x)$ in $k(x)\langle S_x \rangle$ such that
\begin{equation}\label{EQ:teleshift}
    L(x, S_x)(h) = D_y(g),
\end{equation}
for some mixed term~$g$ over~$k(x, y)$.

Applying the integral sign to
both sides of~\eqref{EQ:teleshift} yields
\[L(x, S_x)(H(x)) = g(x, +\infty)-g(x, 0).\]
This further implies that~$L(x, S_x)$ is indeed the recurrence relation satisfied by~$H(x)$
under certain nice boundary condition, say~$g(x, +\infty)=g(x, 0)$.
For example, consider the integral
\[A(x)=\int_{0}^{+\infty} y^{x-1}\exp(-y) \, dy. \]
The differential variant \cite{Almkvist1990} of Zeilberger's algorithm
brings us a pair~$(L, g)$ with
\[L = S_x-x\quad \text{and}\quad g=-y^x\exp(-y).\]
Note that~$g(x, +\infty)=g(x, 0)=0$, which implies that~$L(A(x))=A(x+1)-xA(x)=0$.
So we recognize that $A(x)=\Gamma(x)$ since the initial value~$A(1)$ is equal to~$1$.
For more interesting examples, see the appendix of~\citep{Almkvist1990}
or Koepf's book~\citep[Chapters~10--13]{Koepf1998}.

\begin{define}\label{DEF:telescoper}
Let~$h$ be a mixed term over~$\mfield$.
A nonzero linear operator~$L(x, \pa_x)\in k(x)\langle \pa_x \rangle$ is called a \emph{telescoper of type\/~$(\pa_x, \pa_y)$} for~$h$
if there exists another mixed term $g$ such that
\begin{equation}\label{EQ:telescoper}
L(x, \pa_x)(h) = \Delta_y(g).
\end{equation}
\end{define}

For a given mixed term, when does a telescoper of certain type exist?
And how one can construct telescopers?
These are two basic problems related to the method of creative telescoping.
In the subsequent sections, we will answer
the first one for the mixed cases. More precisely, we solve the following problem, which is equivalent to the termination problem
of creative-telescoping algorithms for mixed inputs.

\medskip \noindent
{\bf Existence Problem for Telescopers.}\,\,
For a mixed term~$h$ over~$\mfield$,
find a necessary and  sufficient condition on the existence
of telescopers of type~$(\pa_x, \pa_y)$ for~$h$.

\section{Exact terms and additive decompositions} \label{SECT:ad}
For a univariate hypergeometric term~$H(y)$, the Gosper algorithm~\citep{Gosper1978} decides whether it is hypergeometric summable \wrt~$y$,
i.e., whether $H=(S_y - 1)(G)$ for some hypergeometric term~$G$.
Based on the Gosper algorithm,
\cite{Zeilberger1990c, Zeilberger1990} developed his fast version of creative-telescoping algorithms
for bivariate hypergeometric terms. \cite{Almkvist1990} presented
a continuous analogue of the Gosper algorithm for deciding the hyperexponential integrability,
which leads to a fast algorithm for hyperexponential telescoping. From the viewpoint of creative telescoping,
the Gosper algorithm and its continuous analogue decide whether
the identity operator,~$1$, is a telescoper for the inputs.

The following notion of \emph{exact terms\/} is motivated in the
differential case by the existence of an underlying exact form.  This
differential-form point of view was used in a recent work of one of
the authors \citep{ChenKauersSinger2012}.

\begin{define} \label{DEF:exact}
Let~$h$ be a mixed term over~$\mfield$. We say that~$h$
is~\emph{exact~\wrt\/~$\pa_y$} if there exists a mixed term~$g$ such that~$h=\Delta_y(g)$,
where~$\Delta_y$ is defined in~\eqref{DEF:delta}.
\end{define}
\begin{remark}
In~\citep{AbramovPetkovsek2002b, GeddesLeLi2004}, an exact term is traditionally called~\emph{($q$-)hypergeometric summable\/} term in
the discrete case, and a~\emph{hyperexponential integrable\/} function in the continuous case, respectively.
For each choice of~$\pa_x$ in~$\{D_x, S_x, T_x\}$,
it is clear that every exact term~\wrt~$\pa_y$
has a telescoper of type~$(\pa_x, \pa_y)$:
for instance~$1$ is such a telescoper.
\end{remark}

The notion of spread polynomials is defined by~\citet{Abramov2003} for
establishing his criterion on the existence of telescopers for hypergeometric terms.
The following definition translates it into the continuous and $q$-discrete cases.
\begin{define} \label{DEF:spread}
Let~$K$ be a field of characteristic zero and~$\de_z, \si_z, \tau_z$ be the usual derivation,
shift and $q$-shift operators over~$K[z]$, respectively.
For a polynomial~$a\in K[z]$, we say that:
\begin{enumerate}
\item[(i)] $a$ is {\em $\de_z$-spread\/} if every nontrivial irreducible factor of~$a$
has multiplicity~$>1$;
\item[(ii)] $a$ is {\em $\si_z$-spread\/} if, for every nontrivial irreducible factor~$b$ of~$a$,
$\si_z^i(b) \mid a$ for some nonzero integer~$i$;
\item[(iii)] $a$ is {\em $\tau_z$-spread\/} if, for every nontrivial irreducible factor~$b$ of~$a$ with~$z \nmid b$,
$\tau_z^i(b) \mid a$ for some nonzero integer~$i$.
\end{enumerate}
\end{define}

(Note that Case~(iii) of this definition makes no constraint on the
multiplicity of~$z$ in~$b$.  This is because we shall only consider
$\tau_z$-spread \emph{non-split\/} polynomials in what follows.)

The following proposition is a mixed analogue of Theorem~8 in~\citep{Abramov2003},
which relates exact terms to spread polynomials.

\begin{proposition} \label{PROP:sum}
Let~$h=fh^\prime$ be a mixed term over~$\mfield$, where~$f\in k(x, y)$ and~$h^\prime\in \cH(u, v)$.
Then the following statements hold:
\begin{enumerate}
\item[(i)]  Case~$\eta_y=\delta_y$: If~$\den(v)$ is split and~$h$ is exact~\wrt~$D_y$, then the non-split
part of~$\den(f)$ is $\de_y$-spread.
\item[(ii)] Case~$\eta_y = \si_y$: If~$v$ is split and~$h$ is exact~\wrt~$S_y$, then the non-split
part of~$\den(f)$ is $\si_y$-spread.
\item[(iii)] Case~$\eta_y = \tau_y$: If~$v$ is split and~$h$ is exact~\wrt~$T_y$, then the non-split
part of~$\den(f)$ is $\tau_y$-spread.
\end{enumerate}
\end{proposition}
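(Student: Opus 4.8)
The plan is to turn exactness into a single rational identity for $f$, and then to read off the spread property by a pole-order (valuation) analysis along the non-split irreducible factors of $\den(f)$.

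First I would use the hypothesis that $h$ is exact to produce a convenient representation of the witness. By definition there is a mixed term $g$ with $h=\Delta_y(g)$, and since $h\neq 0$, Lemma~\ref{LM:similar}(ii) shows that $h=\Delta_y(g)$ is similar to $g$. As $h=fh^\prime$ with $h^\prime\in\cH(u,v)$ invertible by Lemma~\ref{LM:mhyper}(i), we may write $g=\tilde f h^\prime$ for some $\tilde f\in k(x,y)$. Expanding $\Delta_y(\tilde f h^\prime)$ by the product rule, using the $\eta_y$-certificate $\eta_y(h^\prime)=vh^\prime$, and cancelling the invertible factor $h^\prime$, I obtain $f=\de_y(\tilde f)+v\tilde f$ in case (i), $f=v\,\si_y(\tilde f)-\tilde f$ in case (ii), and $f=v\,\tau_y(\tilde f)-\tilde f$ in case (iii). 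Write $Q=\den(\tilde f)$. The rest of the proof fixes a non-split irreducible factor $p$ of $\den(f)$ and tracks $\operatorname{ord}_p$, the order of $p$ as a pole. I would record at the outset that a non-split irreducible is $y$-dependent (hence coprime to $y$), and, being non-constant in $y$, is coprime to both $\num(v)$ and $\den(v)$ as soon as $v$ (resp.\ $\den(v)$) is split.

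In the differential case (i) the conclusion is immediate. Since $\den(v)$ is split, $p$ can be a pole of the right-hand side only if $p\mid Q$; as $p\mid\den(f)$, this forces $e:=\operatorname{ord}_p(\tilde f)\ge 1$. Then $v\tilde f$ has pole order at most $e$ along $p$, whereas $\de_y$ raises the pole order of $\tilde f$ along the $y$-dependent irreducible $p$ by exactly one, to $e+1$. These orders differ, so no cancellation occurs in $f=\de_y(\tilde f)+v\tilde f$ and $\operatorname{ord}_p(f)=e+1\ge 2$; hence every non-split irreducible factor of $\den(f)$ has multiplicity~$>1$, i.e.\ the non-split part of $\den(f)$ is $\de_y$-spread. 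For the shift and $q$-shift cases (ii) and (iii), with $\eta_y\in\{\si_y,\tau_y\}$, I would run the same identity through a dispersion argument. For non-split $p$, let $m_j$ be the multiplicity of $\eta_y^j(p)$ in $Q$. Because $v$ is split, $v$ contributes pole order $0$ at every non-split prime, so $\operatorname{ord}_{\eta_y^j(p)}(v\,\eta_y(\tilde f))=m_{j-1}$ while $\operatorname{ord}_{\eta_y^j(p)}(\tilde f)=m_j$. If $p\mid\den(f)$ then $\max(m_{-1},m_0)>0$, so the $\eta_y$-orbit of $p$ meets $Q$; let $[a,b]$ be the finite, nonempty range of indices $j$ with $m_j>0$. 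At the two boundaries the competing orders differ, since $m_{a-1}=0<m_a$ and $m_b>0=m_{b+1}$, so no cancellation occurs and both $\eta_y^{a}(p)$ and $\eta_y^{b+1}(p)$ divide $\den(f)$. As $a\le b<b+1$, at least one of $a,b+1$ is nonzero, exhibiting a nonzero integer $i$ with $\eta_y^i(p)\mid\den(f)$; since $\eta_y^i(p)$ is again non-split, this is exactly the $\eta_y$-spread condition of Definition~\ref{DEF:spread}.

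I expect the main obstacle to be the cancellation bookkeeping in cases (ii) and (iii): one must verify that splitness of $v$ really forbids $v$ from introducing or annihilating poles at non-split primes, and that distinct $\eta_y$-shifts of a non-split irreducible remain non-associate, so that the two boundary factors $\eta_y^{a}(p)$ and $\eta_y^{b+1}(p)$ are genuinely different and at least one carries a nonzero index. For $\tau_y$ this non-associativity is precisely where the hypothesis that $q$ is not a root of unity enters, and it is also where the exclusion ``$y\nmid b$'' in Definition~\ref{DEF:spread}(iii) is seen to be harmless, because any non-split irreducible is automatically coprime to $y$. The differential case needs only the parallel (and easier) fact that $\de_y$ strictly increases pole order along a $y$-dependent irreducible, which holds in characteristic zero.
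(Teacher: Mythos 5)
Your proof is correct and follows essentially the same route as the paper's: it reduces exactness to the rational identity $f=\de_y(r)+vr$ (resp.\ $f=v\,\eta_y(r)-r$) for a witness $r=g/h'$, and then compares pole orders of the two terms at the $\eta_y$-shifts of a non-split irreducible factor of $\den(f)$, extracting the two boundary shifts of the orbit, at least one of which has nonzero index. The differences are purely notational (valuations and the min/max of the support set, versus the paper's divisibility bookkeeping on the set $L$), including the same closing remark that distinct $q$-shifts of a non-split irreducible are coprime because $q$ is not a root of unity.
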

\begin{proof}
Let~$p$ be a non-split irreducible factor of~$\den(f)$ with $\deg_x p>0$ and~$\deg_y p >0$.

To prove the first assertion, we assume that~$\den(v)$ is split and~$h=D_y(g)$, where~$g$ is either equal to zero or similar to~$h$.
Then~$g=rh^\prime$ for some~$r \in k(x, y)$.
It follows from~$h=D_y(g)$ that~$f h^\prime = D_y(r h^\prime)$, which implies that
\[f = \de_y(r) + rv. \]
Since~$\den(v)$ is split, $p$ is an irreducible factor of~$\den(r)$.
So there exists an integer~$i>1$ such that~$p^i\mid \den(\de_y(r))$ and~$p^{i}\nmid \den(r)$.
Therefore, $p^i \nmid \den(rv)$ and $p^i \mid \den(f)$.
The first assertion holds.

To prove the second assertion, we assume that~$v$ is split and~$h=(S_y-1)(g)$, where~$g$ is either zero or similar to~$h$.
Then~$g=rh^\prime$ for some~$r\in k(x, y)$. From~$h=(S_y-1)(g)$, it follows that~$f h^\prime = (S_y-1)(r h^\prime)$.
Hence, we get
\[f = v\si_y(r) - r.\]
Since~$v$ is split, $p \mid \den(r)$ or~$p \mid \den(\si_y(r))$.
So, the set
\[L := \{ \ell\in\bZ \text{ such that } \si_y^\ell(p) \mid \den(r) \}\]
is finite and nonempty (consider $0\in L$ and $-1\in L$, respectively).
Therefore, there exist~$i, j \in \bZ$ with~$i>j$ such that
$i-1 \in L$, $i \not\in L$, $j-1 \not\in L$, $j \in L$, that is
\[\si_y^i(p) \mid \den(\si_y(r)), \,\, \si_y^i(p) \nmid \den(r), \,\,
\si_y^j(p) \nmid \den(\si_y(r)), \,\,
\text{and} \,\, \si_y^j(p) \mid \den(r) .\]
It follows that from the above equation both~$\si_y^i(p)$
and~$\si_y^j(p)$ divide~$\den(f)$.  If $i$~is nonzero, then both $p$
and~$\sigma^i(p)$ divide~$\den(f)$, so the non-split part of~$\den(f)$
is $\sigma$-spread.  Otherwise, $j$~must be nonzero.  Then the
non-split part is again $\sigma$-spread since both $\sigma^j(p)$
and~$p$ divide~$\den(f)$.  In any case, the second assertion holds.

The third assertion can be proved in the same vein as in the second case. We only need to note that an
irreducible factor~$p$ of the non-split part of~$\den(f)$ has at least two terms, which
implies that~$\tau_y^i(p)$ and~$\tau_y^j(p)$ are coprime if~$i \neq j$.
\end{proof}

The next notion to be introduced, related to exact terms, is that of
additive decompositions.

An algorithm by~\citet{Abramov2001, AbramovPetkovsek2002b} decomposes
a hypergeometric term~$H(y)$ into the sum~$\Delta_y(H_1) + H_2$, where~$H_2$ is minimal in some sense.
Such a decomposition is called an \emph{additive decomposition\/} for~$H$ \wrt~$y$.
Abramov and Petkov{\v s}ek's algorithm generalizes the capability of the Gosper algorithm in the sense that~$H$ is
hypergeometric summable if and only if~$H_2$ is zero.
In the continuous case, an algorithm to decompose a hyperexponential
function $H(y)$ as $D_y(H_1)+H_2$, where $H_1$ and~$H_2$ are either
zero or hyperexponential, is part of the proof of~Lemma~4.2
in~\cite{Davenport1986}.  This remained unknown to
\cite{GeddesLeLi2004}, who later described a similar additive
decomposition as a continuous analogue of Abramov and Petkovsek's
algorithm, but also proved that~$H_2$~satisfies certain minimality
requirement.
On the other hand, a $q$-discrete analogue is presented
in~\citep{ChenHouMu2005}.
When~$H$ is a rational function, additive decompositions
are more classical;
they were presented by ~\citet{Ostrogradsky1845} and~\citet{Hermite1872} for the
continuous case, and by~\citet{Abramov1975, Abramov1995b} for the discrete and $q$-discrete cases.

For a mixed term~$h$ over~$\mfield$, we can perform three kinds of
additive decompositions~\wrt~$y$ according to the choice of~$\eta_y$.
We recall now the notions related to additive decompositions for later
use.  Additive decompositions will be defined in
Definition~\ref{DEF:additive} below, after three steps of preliminary
material.

First, we borrow from~\citep{Abramov2001, AbramovPetkovsek2002b},
\citep{GeddesLeLi2004}, and~\citep{ChenHouMu2005} different notions of
reduced rational functions.

\begin{define} \label{DEF:reduced}
Let~$K$ be a field of characteristic zero and~$f$ be a rational function in~$K(z)$.
Denote by~$\de_z, \si_z,$ and~$\tau_z$ the usual derivation, shift and $q$-shift operators with
respect to~$z$ on~$K(z)$, respectively, and set~$a=\num(f)$, $b = \den(f)$. We say that
\begin{enumerate}
\item[(i)] $f$ is {\em $\de_z$-reduced\/} if~$\gcd(b, a-i\de_z(b))=1$ for all~$i \in \bZ$;
\item[(ii)] $f$ is {\em $\si_z$-reduced\/} if~$\gcd(b, \si_z^i(a))=1$ for all~$i \in \bZ$; and
\item[(iii)] $f$ is {\em $\tau_z$-reduced\/} if~$\gcd(b, \tau_z^i(a))=1$ for all~$i \in \bZ$.
\end{enumerate}
\end{define}

The next lemma reveals a connection between reduced rational functions and split ones. Its proof is based on the structure
of compatible rational functions given in Figure~\ref{FIG:structure}.

\begin{lemma} \label{LM:reduced}
Let~$u, v\in k(x, y)$ be two compatible rational functions~\wrt~$(\xi_x, \eta_y) \in \Theta$.
\begin{enumerate}
\item[(1)] \emph{Case~$\eta_y=\de_y$ and $\de_y$-reduced~$v$.}
  If~$\xi_x\in \{\si_x, \tau_x\}$, then both $u$ and~$\den(v)$ are
  split.
\item[(2)] \emph{Case~$\eta_y=\si_y$ and $\si_y$-reduced~$v$.}  Then:
\begin{enumerate}
\item[(2a)] if~$\xi_x=\de_x$, then both~$\den(u)$ and~$v$ are split;
\item[(2b)] if~$\xi_x=\tau_x$,  then both~$u$ and~$v$ are split.
\end{enumerate}
\item[(3)] \emph{Case~$\eta_y=\tau_y$ and $\tau_y$-reduced~$v$.}
  Then:
\begin{enumerate}
\item[(3a)] if~$\xi_x=\de_x$, then both~$\den(u)$ and~$v$ are split;
\item[(3b)] if~$\xi_x=\si_x$, then both~$u$ and~$v$ are split.
\end{enumerate}
\end{enumerate}
\end{lemma}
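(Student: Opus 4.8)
The plan is to deduce every item from a single \emph{splitting lemma}: whenever $v$ is $\eta_y$-reduced and $u,v$ are written as in the row of Figure~\ref{FIG:structure} matching $(\xi_x,\eta_y)$, in terms of a rational function~$f$ and univariate data $\alpha,\beta,\gamma$, the rational function~$f$ is forced to be split, say $f=f_1(x)f_2(y)$. Granting this, each conclusion is pure bookkeeping on Figure~\ref{FIG:structure}. Indeed, if $f=f_1(x)f_2(y)$ then $\xi_x(f)/f\in k(x)$ and $\eta_y(f)/f\in k(y)$, so in the rows where the certificate has the \emph{product} shape (e.g.\ $u=(\si_x(f)/f)\,\beta\,\gamma$ or $v=(\tau_y(f)/f)\,\beta$) that certificate becomes a product of a univariate factor in~$x$ and a univariate factor in~$y$ and is therefore split; in the rows where a certificate has the \emph{additive} shape $\xi_x(f)/f + y\,\xi_x(\beta)/\beta + \alpha$ (or its $x\leftrightarrow y$ mirror) it lies in $k(x)+y\,k(x)$, a polynomial of degree $\le 1$ in~$y$ over~$k(x)$, whence its denominator lies in $k[x]$ and is split. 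Matching these two patterns against the six rows yields exactly the six stated alternatives (``$u$ split'' versus ``$\den(u)$ split'', and symmetrically in~$v$).

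So the real content is the splitting lemma, which I would prove by contradiction in each of the three cases $\eta_y\in\{\de_y,\si_y,\tau_y\}$, assuming $f$ has a non-split irreducible factor~$p$. Such a~$p$ satisfies $\deg_x p>0$ and $\deg_y p>0$, so in particular $y\nmid p$, and the split data $\alpha,\beta,\gamma$ can never cancel a shift of~$p$. For $\eta_y=\de_y$ the row of Figure~\ref{FIG:structure} gives $v=\de_y(f)/f + s$ with $\den(s)$ split; the logarithmic derivative contributes a simple pole $m\,\de_y(p)/p$ at~$p$, where $m\neq0$ is the multiplicity of~$p$ in~$f$, and this pole survives, so $p$ divides $\den(v)$ to the first power with residue~$m$. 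Reducing the $\de_y$-reduced condition $\gcd(\den(v),\,\num(v)-i\,\de_y(\den(v)))=1$ modulo~$p$ and choosing the \emph{integer} $i=m$ forces $p$ into the gcd, a contradiction. For $\eta_y=\si_y$ (and verbatim for $\tau_y$) the row gives $v=(\si_y(f)/f)\,s$ with $s$ split, so every non-split factor of~$v$ comes from $\si_y(f)/f$; inside one $\si_y$-orbit of~$p$ the net exponent $E(c)$ of $\si_y^c(p)$ is a difference of consecutive exponents of~$f$ along the orbit, so $\sum_c E(c)=0$ while $E\not\equiv 0$ (a sequence of finite support summing to zero either vanishes or takes both signs). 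A positive value of~$E$ marks a factor of $\num(v)$ and a negative one a factor of $\den(v)$, at two shifts $\si_y^{c_+}(p)$ and $\si_y^{c_-}(p)$; then $i=c_--c_+$ defeats the $\si_y$-reduced condition $\gcd(\den(v),\,\si_y^i(\num(v)))=1$.

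The main obstacle will be the shift and $q$-shift cases of the splitting lemma, where I must rule out cross-cancellation along a whole orbit and be sure the ``boundary'' factors genuinely appear in~$v$. The telescoping identity $\sum_c E(c)=0$ is what makes this robust, since distinct shifts $\si_y^c(p)$ are pairwise non-associate irreducibles and hence do not interact. For $\tau_y$ one extra point is needed, exactly as in the proof of Proposition~\ref{PROP:sum}: a non-split irreducible~$p$ has at least two terms, so its $q$-shifts $\tau_y^i(p)$ are pairwise coprime (here $q$ is not a root of unity), which legitimizes the same orbit bookkeeping. By contrast the $\de_y$ case is comparatively soft, relying only on the fact that the pole orders of a logarithmic derivative are integers, matched against the integer parameter~$i$ in Definition~\ref{DEF:reduced}. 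Once the splitting lemma is in place, the six cases follow by specializing to the rows of Figure~\ref{FIG:structure}: the two rows with $\eta_y=\de_y$ and $\xi_x\in\{\si_x,\tau_x\}$ give item~(1), and the remaining product/additive rows give items~(2) and~(3).
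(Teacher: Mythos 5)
Your proposal is correct and follows essentially the same route as the paper: both reduce every item to showing that the rational function $f$ in the relevant row of Figure~\ref{FIG:structure} is split, and then read off the splitness of $u$, $v$, $\den(u)$, $\den(v)$ from the product versus additive shape of each certificate. The only cosmetic difference is that for the $\de_y$ case the paper invokes Lemma~2 of Geddes--Le--Li where you give the residue computation directly, and for the ($q$-)shift cases your telescoping-exponent argument along the $\si_y$-orbit of a non-split irreducible factor is the same boundary-shift argument the paper writes out explicitly.
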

\begin{proof}
Assume that~$\xi_x=\si_x$ and~$\eta_y = \de_y$.
By the conclusion in the fourth row of Figure~\ref{FIG:structure}, there exist~$f\in k(x, y)$, $\gamma\in k(x)$,
$\alpha, \beta\in k(y)$ such that
\begin{equation}\label{EQ:sd}
u = \frac{\si_x(f)}{f}\beta \gamma \quad \text{and} \quad v=\frac{\de_y(f)}{f} + x \frac{\de_y(\beta)}{\beta} + \alpha.
\end{equation}
Since~$v$ is~$\de_y$-reduced, every irreducible factor of~$\den(f) \cdot \num(f)$ is either in~$k[x]$ or in~$k[y]$
by Lemma~2 in~\citep{GeddesLeLi2004} and the second equality in~\eqref{EQ:sd}. Hence,~$\den(v)$ is split.
As $\den(f) \cdot \num(f)$ is split, it also follows from the first equality
that~$u$ is split. The first assertion holds when~$\xi_x=\si_x$.
Using the conclusion in the fifth row of Figure~\ref{FIG:structure} and a similar argument, one can show that the assertion
holds when~$\xi_x = \tau_x$.

To prove the second assertion, we first assume that~$\xi_x=\de_x$ and~$\eta_y = \si_y$.
By the conclusion in the first row of Figure~\ref{FIG:structure}, there exist~$f\in k(x, y)$,
$\alpha, \beta\in k(x)$, and~$\gamma\in k(y)$ such that
\begin{equation}\label{EQ:ds}
u = \frac{\de_x(f)}{f} + y \frac{\de_x(\beta)}{\beta} + \alpha  \quad \text{and} \quad v= \frac{\si_y(f)}{f}\beta \gamma.
\end{equation}
Let~$p$ be a non-split nontrivial irreducible factor of~$ \den(f) \cdot \num(f)$.
Then there exists a nonnegative integer~$i$ such that
\[\si_y^i(p)\mid \den(f)\cdot \num(f) \quad \text{and} \quad \si_y^{i+\ell}(p)\nmid  \den(f) \cdot \num(f)
\,\, \mbox{for all~$\ell>0$}, \]
which, together with the second equality in~\eqref{EQ:ds},~$\beta \in k(x)$ and~$\gamma \in k(y)$, implies that
\begin{enumerate}
\item[$\bullet$]  $\si_y^{i+1}(p)\mid \num(v)$ and~$\si_y^{j}(p)\mid \den(v)$ if
$$\si_y^i(p) \mid \num(f), \quad \si_y^j(p) \mid \num(f), \quad \text{but} \quad \si_y^{j-1}(p) \nmid \num(f);$$
\item[$\bullet$]  $\si_y^{i+1}(p)\mid \den(v)$ and~$\si_y^{j}(p)\mid \num(v)$ if
$$\si_y^i(p) \mid \den(f), \quad \si_y^j(p) \mid \den(f), \quad \text{but} \quad \si_y^{j-1}(p) \nmid \den(f).$$
\end{enumerate}
Either case leads to a contradiction to the assumption that~$v$ is~$\si_y$-reduced.
Hence,~$f$ is split and so is~$v$. By the first equality in~\eqref{EQ:ds},~$\den(u)$ is split. This proves Assertion~(2a).

Let~$\xi_x = \tau_x$. By the conclusion in the sixth row of Figure~\ref{FIG:structure}, there exist~$f \in k(x,y)$,
$\alpha \in k(y)$ and~$\beta \in k(x)$ such that
\begin{equation} \label{EQ:qs}
u = \frac{\tau_x(f)}{f} \beta \quad \text{and} \quad v = \frac{\si_y(f)}{f} \alpha.
\end{equation}
The same argument used in Case (2a) implies that~$f$ is split, and so is~$v$.
By the first equality in~\eqref{EQ:qs}, $u$ is also split.

Finally, we let~$\eta_y=\tau_y$. Then Assertions (3a) and (3b) hold by
the
conclusions in second and third rows of Figure~\ref{FIG:structure} and similar arguments used in the proofs
of Assertions~(2a) and~(2b), respectively.
\end{proof}

Second, we recall the notions of squarefree, shift-free and~$q$-shift-free polynomials.
\begin{define} \label{DEF:sfree}
Let~$K$ be a field of characteristic zero, and~$a$ be a nonzero polynomial in~$K[z]$.
Denote by~$\de_z, \si_z,$ and~$\tau_z$ the usual derivation, shift and $q$-shift operators with
respect to~$z$ on~$K(z)$, respectively.
\begin{enumerate}
  \item [(i)]  $a$ is said to be~\emph{squarefree\/} if~$\gcd\left(a, \de_z(a) \right)=1$.
  \item [(ii)] $a$ is said to be~\emph{shift-free\/} if~$\gcd(a, \si_z^i(a))=1$ for every nonzero integer~$i$.
  \item [(iii)] Let~$a=z^s \tilde{a}$ with~$\tilde{a}\in K[z]$ and~$z \nmid \tilde{a}$.
  Then~$a$ is said to be~\emph{$q$-shift-free\/}
  if
  $$\gcd \left( \tilde{a},  \tau_z^i(\tilde{a})\right)=1 \quad \mbox{for every nonzero integer~$i$}.$$
\end{enumerate}
\end{define}

Note that every non-split polynomial in~$k[x,y]$ has at least two terms.
For a polynomial in~$K[z]$ having at least two terms, it is not $\de_z$-spread (resp.\ $\si_z$-spread, $\tau_z$-spread)
if it is squarefree (resp.\ shift-free, $q$-shift-free) with respect to~$z$.  However, the converses are
false, as shown in the examples below.
\begin{example}
Let~$p=z^2(z+1)$. Since the multiplicity of~$z+1$ in~$p$ in~$1$, $p$ is not~$\de_z$-spread.
However, $p$ is not squarefree with respect to~$z$, because~$\gcd(p, \de_z(p))=z\notin K$.
\end{example}

\begin{example}
Let~$p=z(z+1)(z+1/2)$. Since any nontrivial shift of~$(z+1/2)$ does not divide~$p$, $p$ is not~$\si_z$-spread.
However, $p$ is not shift-free with respect to~$z$, because~$\gcd(p, \si_z(p))=z+1$.
\end{example}

\begin{example}
Let~$p=(z+1)(z+q)(z+2)$ and~$q\in K$ such that~$q^i\neq 2$ for any~$i\in \bZ$.
Since~$\tau_z^i(z+2)\nmid p$ for any nonzero~$i\in \bZ$, $p$ is not~$\tau_z$-spread.
However, $p$ is not $q$-shift-free with respect to~$z$, because~$\gcd(p, \tau_z(p))=z+1$.
\end{example}

Finally, we define the three additive decompositions in the setting of mixed terms.
\begin{define} \label{DEF:additive}
Let~$h$ be a mixed term over~$\mfield$. Assume that
\begin{equation} \label{EQ:add}
h = \Delta_y(h_1) + h_2
\end{equation}
where~$h_1$ is a mixed term, and~$h_2$ is equal to either zero or a mixed term
of the form
\begin{equation} \label{eq:additive-decomposition}
h_2 \in r \cH(u, v)
\end{equation}
for some~$r, u, v \in k(x, y)$ satisfying
\begin{enumerate}
\item[(a)] $\den(r)$ is squarefree with respect to~$y$, and~$v$ is $\delta_y$-reduced if~$\eta_y=\delta_y$;
\item[(b)] $\den(r)$ is shift-free with respect to~$y$, and~$v$ is~$\si_y$-reduced if~$\eta_y = \si_y$;
\item[(c)] $\den(r)$ is $q$-shift-free with respect to~$y$, and~$v$ is~$\tau_y$-reduced if~$\eta_y = \tau_y$.
\end{enumerate}
\noindent We call~\eqref{EQ:add} an {\em additive decomposition\/} of~$h$ with respect to~$\pa_y$.
\end{define}

Additive decompositions with respect to~$D_y$  can be computed by the algorithms
described in~\citep{Davenport1986} and~\citep{GeddesLeLi2004}.
Additive decompositions with respect to~$S_y$ and~$T_y$ can be computed by the algorithms in~\citep{Abramov2001}
and~\citep{ChenHouMu2005}, respectively. In particular, additive decompositions always exist, although they are not unique.
We remark that the additive decompositions given in Definition~\ref{DEF:additive} are weaker than those in~\citep{Abramov2001}
and~\citep{GeddesLeLi2004}. For example, $h_2$ is not necessarily equal to zero when~$h$ is an exact term.

\section{A criterion on the existence of telescopers for mixed terms} \label{SECT:criteria}
The Fundamental Theorem in~\citep{Wilf1992} states that there exist telescopers for
proper mixed terms. However, properness is just a sufficient condition.
In this section, we present a necessary and sufficient condition  on the existence
of telescopers for mixed terms.

\begin{define} \label{DEF:telescopable}
A mixed term~$h$ over~$\mfield$ is said to be~\emph{telescopable of type\/~$(\pa_x, \pa_y)$} if it has a
telescoper of type~$(\pa_x, \pa_y)$.
\end{define}

In this section, we show that
every proper mixed term is telescopable in~\S\ref{SUBSECT:suff},
and that every telescopable term is
a sum of an exact term and a proper one in~\S\ref{SUBSECT:nece}.

\subsection{Proper terms are telescopable} \label{SUBSECT:suff}
Wilf and Zeilberger present an elementary proof
of the existence of telescopers for proper hypergeometric terms~\cite[Theorem 3.1]{Wilf1992},
and indicate that their argument should be applicable to the mixed
setting. For the sake of completeness, we elaborate a proof
that every proper mixed term is telescopable.
Before presenting the proof, we need a few lemmas.

\begin{lemma}\label{LM:product}
Let~$h$ and~$H$ be two mixed terms over~$\mfield$ with~$\Delta_y(H)=0$, where~$\Delta_y$ is defined in~\eqref{DEF:delta}.
If~$h$ is telescopable of type~$(\pa_x, \pa_y)$, so is~$Hh$.
\end{lemma}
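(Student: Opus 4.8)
The plan is to convert a telescoper of~$h$ into one of~$Hh$ by a gauge transformation (conjugation) by~$H$, exploiting that the hypothesis~$\Delta_y(H)=0$ makes~$H$ behave like a scalar with respect to~$\pa_y$ and, more subtly, forces the~$\xi_x$-certificate of~$H$ to be a univariate rational function in~$x$. Throughout, let~$m_H\colon\cR\to\cR$ denote multiplication by~$H$, and recall that~$H$ is invertible by Lemma~\ref{LM:mhyper}.

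First I would record the elementary identity
\[ \Delta_y(Hr)=H\,\Delta_y(r)\qquad\text{for every } r\in\cR. \]
When~$\eta_y=\de_y$ this is the product rule together with~$\de_y(H)=0$; when~$\eta_y\in\{\si_y,\tau_y\}$ it follows from the multiplicativity of~$\eta_y$ and~$\eta_y(H)=H$, both of which are encoded in~$\Delta_y(H)=0$. This identity lets one move~$H$ across~$\Delta_y$ freely.

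Second---and this is the crux---I would prove that the~$\xi_x$-certificate~$c$ of~$H$ lies in~$k(x)$. Writing the~$\xi_x$- and~$\eta_y$-certificates of~$H$ as~$c$ and~$d$, the condition~$\Delta_y(H)=0$ forces~$d=0$ (if~$\eta_y=\de_y$) or~$d=1$ (if~$\eta_y\in\{\si_y,\tau_y\}$). I would feed this into the multiplicative decomposition of Figure~\ref{FIG:decomp}, writing~$H=fH'$ with~$H'\in\cH(\cdot,\cdot)$ having \emph{univariate} certificates, and compute~$c$ and~$d$ via Lemma~\ref{LM:hrule}. In each of the six cases the equation for~$d$ states that~$\eta_y(f)/f$ (resp.\ $\de_y(f)/f$) is split; by the pole/shift-orbit argument already used in the proof of Lemma~\ref{LM:den}, this forces the non-split part of~$f$ to be trivial, i.e.\ $f$ to be split. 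Comparing the~$x$-part and~$y$-part in the resulting scalar equation then removes the remaining~$y$-dependence of~$c$: in the~$(\de_x,\si_y)$ case, for instance, it forces the auxiliary factor~$\beta\in k(x)$ to be constant, so that the term~$y\,\de_x(\beta)/\beta$ occurring in~$c$ vanishes. In every case one concludes~$c\in k(x)$. I expect this structural step to be the main obstacle, since it is the only place where the precise shape of~$H$ enters, and it must be verified against all six entries of Figure~\ref{FIG:decomp}.

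Finally, with~$c\in k(x)$ in hand, I would perform the conjugation. A direct computation shows that, as operators on~$\cR$,
\[ m_H\circ\pa_x\circ m_H^{-1}=\pa_x-c\quad(\xi_x=\de_x),\qquad m_H\circ\pa_x\circ m_H^{-1}=c^{-1}\pa_x\quad(\xi_x\ \text{an automorphism}). \]
Hence, given a telescoper~$L=\sum_i\ell_i\pa_x^i\in k(x)\langle\pa_x\rangle$ for~$h$ with~$L(h)=\Delta_y(g)$, I set
\[ \tilde{L}:=m_H\,L\,m_H^{-1}=\sum_i\ell_i\,(\pa_x-c)^i\quad(\xi_x=\de_x),\qquad \tilde{L}:=\sum_i\ell_i\,(c^{-1}\pa_x)^i\quad(\xi_x\ \text{an automorphism}). \]
Because~$c\in k(x)$, all coefficients remain in~$k(x)$, so~$\tilde{L}\in k(x)\langle\pa_x\rangle$; its leading coefficient is a nonzero multiple of~$\ell_{\deg L}$, whence~$\tilde{L}\neq0$. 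Then
\[ \tilde{L}(Hh)=m_H\,L\,m_H^{-1}(Hh)=H\,L(h)=H\,\Delta_y(g)=\Delta_y(Hg), \]
and~$Hg$ is a mixed term, being a product of mixed terms and therefore invertible and nonzero by Lemma~\ref{LM:mhyper}. Thus~$\tilde{L}$ is a telescoper of type~$(\pa_x,\pa_y)$ for~$Hh$, which proves the lemma.
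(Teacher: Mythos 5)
Your proof is correct and its core mechanism is the same as the paper's: conjugate the telescoper~$L$ by multiplication by~$H$, which replaces $\pa_x^i$ by~$(\pa_x-c)^i$ when $\xi_x=\de_x$ and by~$\bigl(\prod_{j=0}^{i-1}\si_x^j(c)\bigr)^{-1}\pa_x^i$ when $\xi_x$ is an automorphism, and then use $\Delta_y(Hg)=H\Delta_y(g)$ to close the argument. The only genuine divergence is in how the crucial fact $c\in k(x)$ is established. You route it through the structural decomposition of Figure~\ref{FIG:decomp}, Lemma~\ref{LM:hrule}, and the splitness/orbit arguments of Lemma~\ref{LM:den}, forcing the rational part~$f$ to be split and the auxiliary factor~$\beta$ to be constant, case by case over the six entries of the table. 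The paper instead reads the condition $b=\eta_y(1)$ directly into the compatibility conditions of Figure~\ref{FIG:compatible}, obtaining $\de_y(c)=0$, $\si_y(c)=c$, or $\tau_y(c)=c$, and then concludes $c\in k(x)$ by comparing leading coefficients (using that $q$ is not a root of unity in the $\tau_y$ case). Both arguments are sound; the paper's is lighter and avoids invoking the multiplicative decomposition, while yours has the small advantage of exhibiting explicitly how the $y$-dependent summand $y\,\de_x(\beta)/\beta$ of the certificate disappears. One caution about your version: you should note separately that in the $\xi_x$-automorphism case the certificate~$c$ is nonzero (it is, by the compatibility conditions of Figure~\ref{FIG:compatible}), since your operator $c^{-1}\pa_x$ and the nonvanishing of the leading coefficient of~$\tilde L$ both rely on it.
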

\begin{proof}
Let~$L \in k(x) \langle \pa_x \rangle$ be a telescoper of~$h$ such that~$L(h) = \Delta_y(g)$ for some mixed term~$g$.
Assume that~$H\in \cH(a, b)$. Since~$\Delta_y(H)=0$, $\eta_y(H)=\eta_y(1)H$.
Thus,~$b=\eta_y(1)$. It follows that~$b=0$ if~$\eta_y=\de_y$, and that~$b=1$
if either~$\eta_y = \si_y$ or~$\eta_y = \tau_y$.

By the compatibility conditions on~$a$ and~$b$ given in
Figure~\ref{FIG:compatible}, we derive that $a$~is nonzero and,
depending on~$\eta_y$, that one of the following formulas holds:
\emph{(i)\/} $\delta_y(a) = 0$; \emph{(ii)\/} $\sigma_y(a) = a$;
\emph{(iii)\/} $\tau_y(a) = a$.  At this point, we claim that~$a \in
k(x)$.  This is clear in case~(i).  In case (ii) or~(iii), write
$a=u/v$ for coprime $u$ and~$v$, and $v$~monic \wrt~$y$.  Then, either
$u/v = \si_y(u)/\si_y(v)$ or $u/v = \tau_y(u)/\tau_y(v)$, where all
fractions are reduced.  As a consequence, either $u = \alpha \si_y(u)$
and $v = \beta \si_y(v)$, or $u = \alpha \tau_y(u)$ and $v = \beta
\tau_y(v)$, in each case for $\alpha$ and~$\beta$ nonzero in~$k(x)$.
Case~(ii) leads to $\alpha = \beta = 1$, and $a \in k(x)$.  In
case~(iii), from $q$~not being a root of unity follows that $u$
and~$v$ are monomials in~$y$.  By coprimeness, at most one of them has
nonzero degree in~$y$.  If $u \in k(x)$, then $\alpha/\beta$~is equal
to~$q^j$ where~$j = \deg_y v$, so that, as $\alpha/\beta = 1$, $v$~is
in~$k(x)$ as well.  A symmetric argument leads to the fact that both
$u$ and~$v$ are in~$k(x)$, and so is~$a$.

We continue by distinguishing two cases according to~$\xi_x$:

\noindent
{\em Case 1.} Assume that~$\xi_x = \delta_x$. Then~$\pa_x(Hh)=\pa_x(H)h + H \pa_x(h)$,
so that~$(\pa_x - a) (Hh) = H \pa_x(h)$.
By an easy induction, this implies that
\begin{equation} \label{EQ:replace1}
   (\pa_x - a)^i(Hh) = H \pa_x^i(h) \quad \mbox{for all~$i \in \bN$}.
\end{equation}
Let~$L^\prime$ be the Ore polynomial obtained by replacing~$\pa_x^i$ in~$L$ with~$(\pa_x-a)^i$.
Then~$L^\prime$ belongs to~$k(x)\langle \pa_x \rangle$ because~$a$ does.
By~\eqref{EQ:replace1}, $L^\prime(Hh) = H L(h)$. Hence,
$$L^\prime(Hh) = H L(h) = H \Delta_y (g) = \Delta_y (H g),$$
in which the last equality follows from~$\Delta_y(H)=0$. The product~$Hh$ is telescopable.

\smallskip \noindent
{\em Case 2.} Assume that~$\xi_x = \si_x$ or~$\xi_x = \tau_x$. Then~$\pa_x(Hh)=\pa_x(H)\pa_x(h)$.
So
$$ a^{-1} \pa_x   (Hh) = H \pa_x(h),$$
 which, together with an easy induction, implies that
\[
    \left(\prod_{j=0}^{i-1} \si_x^j (a)\right)^{-1} \pa_x^i(Hh) = H \pa_x^i(h) \quad \mbox{for all~$i \in \bN$}.
\]
Let~$L^{\prime\prime}$ be the Ore polynomial obtained by replacing~$\pa_x^i$ in~$L$ with~$\left(\prod_{j=0}^{i-1} \si_x^j (a)\right)^{-1}
\pa_x^i$.
Then~$L^{\prime\prime}$ belongs to~$k(x)\langle \pa_x \rangle$ because~$a$ does.
Consequently,~$L^{\prime\prime}$ is a telescoper of~$Hh$ of type~$(\pa_x, \pa_y)$ by the same
argument as the one used in Case~1.
\end{proof}

\begin{lemma}\label{LM:closure}
Let~$h_1$ and~$h_2$ be two similar mixed terms over~$\mfield$.
Let~$\bV_y$ be the $k(x)$-vector space spanned
by
\[\left\{ \Delta_y^i (h_j) \, \mid \, i\in \bN, j \in \{1, 2\}\right\}, \]
where~$\Delta_y = \Delta_y$.
If both~$h_1$ and~$h_2$ have telescopers of type~$(\pa_x, \pa_y)$,
so does every nonzero element in~$\bV_y$.
\end{lemma}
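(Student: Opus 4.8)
The plan is to show that the telescopable elements of~$\bV_y$, together with zero, already exhaust~$\bV_y$. Since the spanning set~$\{\Delta_y^i(h_j)\}$ consists of telescopable terms, it suffices to prove that telescopability is preserved under the two operations that build a general element of~$\bV_y$ out of these generators, namely multiplication by a scalar in~$k(x)$ and addition. I would first record two structural facts. Because~$\xi_x$ and~$\eta_y$ commute and~$\Delta_y$ is a polynomial in~$\eta_y$, the operators~$\pa_x$ and~$\Delta_y$ commute on~$\cR$; and because every~$c\in k(x)$ satisfies~$\eta_y(c)=c$ when~$\eta_y\in\{\si_y,\tau_y\}$ and~$\delta_y(c)=0$ when~$\eta_y=\delta_y$, the map~$\Delta_y$ is $k(x)$-linear, i.e.\ $\Delta_y(c\,r)=c\,\Delta_y(r)$ for all~$c\in k(x)$ and~$r\in\cR$.

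Next I would check that every generator is telescopable. For~$i=0$ this is the hypothesis on~$h_1,h_2$; for~$i\ge 1$ the term~$\Delta_y^i(h_j)=\Delta_y(\Delta_y^{i-1}(h_j))$ is exact \wrt~$\pa_y$, its potential~$\Delta_y^{i-1}(h_j)$ being, by Lemma~\ref{LM:similar}, either zero or a mixed term similar to~$h_1$, so it admits the telescoper~$1$. By Lemma~\ref{LM:similar}, each generator and more generally every element of~$\bV_y$ is zero or a mixed term similar to~$h_1$. For the scalar closure, I would observe that a nonzero~$c\in k(x)$ is itself a mixed term with~$\Delta_y(c)=0$, so Lemma~\ref{LM:product} immediately yields that~$c\,v$ is telescopable whenever~$v$ is.

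The crux is closure under addition. Given telescopers~$L(v)=\Delta_y(g)$ and~$L'(v')=\Delta_y(g')$, with~$L,L'\in k(x)\langle\pa_x\rangle$ nonzero and~$g,g'$ mixed terms, I would take~$M=\operatorname{LCLM}(L,L')=PL=QL'$, which is nonzero because~$k(x)\langle\pa_x\rangle$ is an Ore domain. Using that~$P,Q$ have coefficients in~$k(x)$, that~$\pa_x$ commutes with~$\Delta_y$, and that~$\Delta_y$ is $k(x)$-linear, one pulls~$\Delta_y$ outside, $P(\Delta_y(g))=\Delta_y(P(g))$ and likewise for~$Q$, whence
\[ M(v+v')=P(\Delta_y(g))+Q(\Delta_y(g'))=\Delta_y\bigl(P(g)+Q(g')\bigr). \]
By Lemma~\ref{LM:similar} the argument~$P(g)+Q(g')$ is zero or a mixed term, so~$M$ is a telescoper of~$v+v'$.

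Finally I would assemble the pieces: an arbitrary nonzero~$w\in\bV_y$ is a finite $k(x)$-linear combination~$\sum_\ell c_\ell\,\Delta_y^{i_\ell}(h_{j_\ell})$; each summand is telescopable by the scalar step applied to a telescopable generator, and a straightforward induction using the addition step, with the partial sums remaining zero or similar to~$h_1$ by Lemma~\ref{LM:similar}, shows that~$w$ is telescopable. I expect the addition step to be the main obstacle: the essential points are the passage to the least common left multiple to obtain a single nonzero operator annihilating both pieces modulo~$\Delta_y$, and the verification that operators with coefficients in~$k(x)$ commute through~$\Delta_y$, which is exactly what keeps~$P(g)+Q(g')$ a mixed term.
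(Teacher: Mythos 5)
Your proof is correct and follows essentially the same route as the paper's: establish telescopability of the generators~$\Delta_y^i(h_j)$, handle $k(x)$-scalars via Lemma~\ref{LM:product}, and handle sums by passing to a common left multiple of the two telescopers and commuting the $k(x)$-coefficient operators through~$\Delta_y$. The only (immaterial) variation is that for~$i\ge 1$ you observe the generator is exact and so has telescoper~$1$, whereas the paper reuses~$L_j$ directly via $L_j(\Delta_y^i(h_j))=\Delta_y(\Delta_y^i(g_j))$.
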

\begin{proof} Assume that, for~$j=1,2$, $L_j(h_j)=\Delta_y(g_j)$
for some~$L_j \in k(x)\langle \pa_x \rangle$  and for some mixed term~$g_j$. Then
\[L_j \left(\Delta_y^i(h_j)\right)= \Delta_y \left( \Delta_y^i(g_j) \right) .\]
Hence, every nonzero element of $\left\{ \Delta_y^i(h_j) \, \mid \, i\in \bN, j \in \{1, 2\}\right\}$
is telescopable of~$(\pa_x, \pa_y)$.
By Lemma~\ref{LM:product}, a mixed term of the form~$f \Delta_y^i(h_j)$ for every~$f \in k(x)^*$
is also telescopable of the same type. Assume that~$H_1$ and~$H_2$ are two telescopable mixed terms.
Then a common left multiple of their telescopers
is a telescoper of the sum~$H_1+ H_2$. Hence, every nonzero element of~$\bV_y$ is telescopable.
\end{proof}

The \lq\lq noncommutative trick\rq\rq\ below was first used
in~\citep[Theorem~3.2]{Wegschaider1997} to transform a nonzero linear
recurrence operator not involving~$y$ to a telescoper.  Here, we
generalize it to transform an operator in~$k(x)\langle \pa_x, \pa_y
\rangle$ to a telescoper of type~$(\pa_x, \pa_y)$.  The result,
Lemma~\ref{LM:annihilator}, bases on formulas depending on the
operator types, which we state first.  However, only the cases when
$\eta_y$~is $\delta_y$ or~$\si_y$ are necessary in the proof of
Lemma~\ref{LM:sproper}, so Lemmas \ref{LM:noncomm-trick}
and~\ref{LM:annihilator} do not consider $q$-discrete operators.

\begin{lemma} \label{LM:noncomm-trick}
The following statements hold.
\begin{enumerate}
\item \emph{Case~$\eta_y = \si_y$.}  For any~$m \in \bN$, there
  exists~$Q_m\in k[y]\langle S_y \rangle$ satisfying
\[ \frac{y^m}{(-1)^m m!} (S_y-1)^m = 1 + (S_y-1) Q_m . \]
\item \emph{Case~$\eta_y = \delta_y$.}  For any $m \in \bN$, there
  exists~$Q_m\in k[y]\langle D_y \rangle$ satisfying
\[ \frac{y^m}{(-1)^m m!} D_y^m = 1 + D_y Q_m . \]
\end{enumerate}
\end{lemma}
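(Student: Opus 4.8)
The plan is to read each identity as a membership statement in the Ore ring $R$, which is $k[y]\langle S_y\rangle$ in the first case and $k[y]\langle D_y\rangle$ in the second. Writing $\Box$ for $S_y-1$ in the first case and for $D_y$ in the second, the operator $\Delta_y$ of~\eqref{DEF:delta} is exactly $\Box$, and both claims assert that $P_m := \frac{y^m}{(-1)^m m!}\Box^m$ lies in the coset $1 + \Box R$. Thus it suffices to produce $Q_m\in R$ with $P_m-1 = \Box Q_m$. The starting observation is that $\Box$ is monic of degree one in its distinguished variable ($S_y$ or $D_y$), so left division by $\Box$ is available in $R$: every $P\in R$ can be written uniquely as $P = \Box Q + \rho(P)$ with $Q\in R$ and remainder $\rho(P)\in k[y]$. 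Because the quotient $Q$ automatically stays in $R$, the construction of $Q_m$ reduces to the single scalar check $\rho(P_m) = 1$, after which $Q_m$ is simply the quotient of $P_m - 1$ by $\Box$.

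The step I expect to be delicate is computing $\rho$, since $y^m$ sits to the \emph{left} of $\Box^m$ and the naive reduction $\Box^m\equiv 0$ is unavailable: $\rho$ is additive and $k$-linear but \emph{not} left $k[y]$-linear, because $\Box R$ is a right ideal and not a left ideal. The remedy is to record the effect of one division step on a single monomial. In the difference case, $(S_y-1)\,c(y)S_y^{j-1} = \si_y(c)S_y^j - cS_y^{j-1}$ shows that reducing $p(y)S_y^j$ replaces it by $p(y-1)S_y^{j-1}$, so by iteration $\rho\big(\sum_j p_j(y)S_y^j\big) = \sum_j p_j(y-j)$. In the differential case, $D_y\,c(y)D_y^{j-1} = c'(y)D_y^{j-1} + c(y)D_y^j$ gives $\rho\big(\sum_j p_j(y)D_y^j\big) = \sum_j (-1)^j p_j^{(j)}(y)$.

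It then remains to evaluate these remainders on $P_m$ and to recognize a classical identity. For $\eta_y=\si_y$, expanding $(S_y-1)^m = \sum_{i=0}^m\binom mi(-1)^{m-i}S_y^i$ gives coefficient $\frac{y^m}{m!}\binom mi(-1)^i$ of $S_y^i$ in $P_m$, whence $\rho(P_m) = \frac{1}{m!}\sum_{i=0}^m\binom mi(-1)^i(y-i)^m$. This sum is the $m$-th forward difference of the monic polynomial $t\mapsto t^m$ evaluated at $t=y-m$, which equals $m!$; hence $\rho(P_m)=1$. For $\eta_y=\de_y$, the single term gives directly $\rho(P_m) = (-1)^m\frac{1}{(-1)^m m!}(y^m)^{(m)} = 1$. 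In both cases $\rho(P_m-1)=0$, so $P_m-1\in\Box R$ and the required $Q_m\in R$ exists. I would add the remark that $\rho$ coincides with the functional $P\mapsto P^*(1)$ for the formal adjoint ($S_y^*=S_y^{-1}$, $D_y^*=-D_y$), which is the structural reason the same computation settles both cases at once.
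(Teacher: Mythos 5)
Your proof is correct, and it is a genuinely different (and more informative) route than the paper's: the paper disposes of the lemma by citing Theorem~3.2 of Wegschaider's thesis for the difference case and asserting that the differential case is a ``straightforward continuous analogue,'' whereas you give a uniform, self-contained argument. Your two pillars both hold up. Since $S_y-1$ and $D_y$ are monic of degree one in the distinguished variable, left Euclidean division by them does stay inside $k[y]\langle S_y\rangle$ resp.\ $k[y]\langle D_y\rangle$ (the successive quotient coefficients arise by applying $\si_y^{-1}$ or by differentiation, never by dividing in $k[y]$), and uniqueness of the remainder makes $\rho$ additive even though it is not left $k[y]$-linear --- a point you rightly flag. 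Your reduction formulas $\rho\bigl(p(y)S_y^j\bigr)=p(y-j)$ and $\rho\bigl(p(y)D_y^j\bigr)=(-1)^j p^{(j)}(y)$ check out, so the lemma collapses to the scalar identities $\frac{1}{m!}\sum_{i=0}^m\binom{m}{i}(-1)^i(y-i)^m=1$ (the evaluation $\Delta^m t^m=m!$ of the $m$-th forward difference of a monic degree-$m$ polynomial) and $\frac{1}{m!}(y^m)^{(m)}=1$, both of which are standard; characteristic zero guarantees $m!\neq 0$. The adjoint remark at the end is a pleasant structural gloss but is not needed for the argument. What your route buys is an explicit construction of $Q_m$ as the quotient of $P_m-1$ by $\Delta_y$ and a single computation covering both operator types; what the paper's route buys is brevity at the cost of sending the reader to an external source.
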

\begin{proof}
The first assertion is just Theorem~3.2 in~\citep{Wegschaider1997}.
The second assertion follows from a straightforward continuous analogue of Wegschaider's proof
for the first assertion.
\end{proof}

\begin{lemma} \label{LM:annihilator}
Let~$h$ be a mixed term over~$\mfield$, under the restriction $\eta_y
\in \{ \delta_y, \si_y \}$.
If there exists a nonzero operator~$A\in k(x)\langle \pa_x, \pa_y \rangle$ such that~$A(h)=0$,
then~$h$ has a telescoper of type~$(\pa_x, \pa_y)$.
\end{lemma}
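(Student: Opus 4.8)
The plan is to read off a telescoper directly from~$A$ after re-expanding it in powers of~$\Delta_y$. Since $\pa_y = \Delta_y + \pa_y(1)$ with $\pa_y(1)\in k$, I can rewrite $A = \sum_{l=0}^{d}\tilde P_l\,\Delta_y^{l}$ with each coefficient $\tilde P_l\in k(x)\langle\pa_x\rangle$: the change of basis from $\{\pa_y^{j}\}$ to $\{\Delta_y^{l}\}$ is $k$-linear, so it keeps the coefficients free of~$y$. As $A\ne0$, there is a least index~$s$ with $\tilde P_s\ne0$. My claim is that this lowest nonzero coefficient $\tilde P_s$ is itself a telescoper of type~$(\pa_x,\pa_y)$. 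The restriction $\eta_y\in\{\delta_y,\si_y\}$ enters exactly here: it is what lets me invoke Lemma~\ref{LM:noncomm-trick}, which in both admissible cases reads uniformly as the operator identity $\frac{y^{s}}{(-1)^{s}s!}\,\Delta_y^{s} = 1 + \Delta_y Q_s$ for a suitable $Q_s\in k[y]\langle\pa_y\rangle$.

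Next I would start from $A(h)=0$, that is $\sum_{l\ge s}\tilde P_l\Delta_y^{l}(h)=0$, and apply to it the operator of multiplication by $y^{s}/((-1)^{s}s!)$. Using that every $\tilde P_l\in k(x)\langle\pa_x\rangle$ commutes both with multiplication by~$y^{s}$ and with~$\Delta_y$, and factoring $\Delta_y^{l}=\Delta_y^{s}\Delta_y^{l-s}$ so as to feed the trick into the $\Delta_y^{s}$ part, I obtain
\[ 0 = \sum_{l\ge s}\tilde P_l\,(1+\Delta_y Q_s)\,\Delta_y^{l-s}(h) = A'(h) + \Delta_y\Big(\sum_{l\ge s}\tilde P_l\,Q_s\,\Delta_y^{l-s}(h)\Big), \]
where $A':=\sum_{l\ge s}\tilde P_l\,\Delta_y^{l-s}\in k(x)\langle\pa_x,\pa_y\rangle$. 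Hence $A'(h)=\Delta_y(\tilde g)$, where $\tilde g$ is a $k(x,y)\langle\pa_x,\pa_y\rangle$-combination of~$h$, so a mixed term or zero by Lemma~\ref{LM:similar}.

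To finish, I would peel off the $\Delta_y$-free part of~$A'$: since $A'=\tilde P_s+\sum_{l>s}\tilde P_l\Delta_y^{l-s}$ and every summand with $l>s$ carries a factor~$\Delta_y$ that commutes to the left past~$\tilde P_l$, we have $A'(h)=\tilde P_s(h)+\Delta_y(g_4)$ for a mixed term (or zero)~$g_4$. Combining the two displays gives $\tilde P_s(h)=\Delta_y(\tilde g-g_4)$, exhibiting $\tilde P_s\in k(x)\langle\pa_x\rangle\setminus\{0\}$ as a telescoper (when $\tilde g-g_4=0$ one reads this as $\tilde P_s(h)=0=\Delta_y(1)$). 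The sub-case $s=0$, i.e.\ $\tilde P_0\ne0$, is the easy one and needs no trick, since then $\tilde P_0(h)=-\Delta_y\big(\sum_{l\ge1}\tilde P_l\Delta_y^{l-1}(h)\big)$ immediately.

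The only real obstacle is the noncommutativity of multiplication by~$y^{s}$ with~$\Delta_y$, and the point I would stress is how the trick circumvents it. A naive attempt that uses the trick to replace~$h$ by $\frac{y^{s}}{(-1)^{s}s!}\Delta_y^{s}(h)-\Delta_y(Q_s(h))$ and then substitutes the relation leaves a term of the shape $y^{s}\Delta_y(\cdot)$ that fails to be $\Delta_y$-exact once $s\ge2$. The remedy is to multiply the entire relation by $y^{s}$ once and factor $\Delta_y^{s}$ out of every $\Delta_y^{l}$ ($l\ge s$) simultaneously; this absorbs the single power~$y^{s}$ uniformly across all terms and renders the correction manifestly $\Delta_y$-exact, so that the lowest coefficient~$\tilde P_s$ survives as the desired telescoper.
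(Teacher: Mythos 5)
Your proof is correct and follows essentially the same route as the paper: you extract the lowest-order coefficient $\tilde P_s$ of $A$ in the $\Delta_y$-basis (the paper phrases this as the left factorization $A=\Delta_y^m(L+\Delta_y M)$, with $L=\tilde P_s$) and then use Lemma~\ref{LM:noncomm-trick} to turn $A(h)=0$ into $\tilde P_s(h)=\Delta_y(\cdot)$. The computations are identical up to notation, so there is nothing to add.
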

\begin{proof}
Assume that~$A(h)=0$ for some nonzero operator~$A$ in~$k(x)\langle \pa_x, \pa_y \rangle$.
By a repeated use of left-hand division, we can write
\begin{equation} \label{EQ:lrem}
A =\Delta_y^m(L(x, \pa_x) + \Delta_y M),
\end{equation}
where~$m$ is in~$\bN$, $\Delta_y$ is defined in~\eqref{DEF:delta}, $L$ is a nonzero operator in~$k(x)\langle \pa_x \rangle$ and~$M$  is
in~$k(x)\langle \pa_x, \pa_y\rangle$.
By Lemma~\ref{LM:noncomm-trick}, we have
\[
\frac{y^m}{(-1)^m m!}\Delta_y^m = \Delta_y Q + 1
\]
for some~$Q \in k[y]\langle \pa_y \rangle$, which, together with~\eqref{EQ:lrem}, implies that
\[  \frac{y^m}{(-1)^m m!} A = L + \Delta_y N \quad \mbox{for some~$N \in k(x, y)\langle \pa_x, \pa_y \rangle$.}\]
It follows from~$A(h)=0$ that~$L(h)=\Delta_y( - N(h))$. Hence,~$L$ is a telescoper for~$h$ of type~$(\pa_x, \pa_y)$.
\end{proof}

The next lemma paves the way to apply the multiplicative
decompositions in Figure~\ref{FIG:decomp}.  Its proof is reminiscent
of the linear algebra argument given by~\citet{Lipshitz1988}; however,
it bases on linear algebra and filtrations over~$k(x)$ instead of~$k$.
\begin{lemma}\label{LM:sproper}
Let~$h$ be a mixed term over~$\mfield$.
\begin{enumerate}
  \item [(i)] If~$(\xi_x, \eta_y)=(\delta_x, \si_y)$ and~$h\in \cH\left(y\frac{\de_x(\beta)}{\beta}, \beta\gamma\right)$ with~$\beta\in k(x)$
  and~$\gamma\in k(y)$, then~$h$ has a telescoper of type~$(D_x, S_y)$.
  \item [(ii)]  If~$(\xi_x, \eta_y)=(\si_x, \de_y)$ and~$h\in
  \cH\left(\beta,  x\frac{\de_y(\beta)}{\beta} + \alpha \right)$ with~$\alpha, \beta\in k(y)$, then~$h$ has a telescoper of type~$(S_x, D_y)$.
\end{enumerate}
\end{lemma}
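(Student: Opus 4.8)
The plan is to reduce both statements to Lemma~\ref{LM:annihilator}: since $\eta_y\in\{\si_y,\de_y\}$ in the two cases, it suffices to exhibit a \emph{nonzero} operator $A\in k(x)\langle\pa_x,\pa_y\rangle$ with $A(h)=0$. Because $h$ is invertible by Lemma~\ref{LM:mhyper}(i), writing $A=\sum_{i,j}a_{i,j}(x)\pa_x^i\pa_y^j$ with $a_{i,j}\in k(x)$ not all zero, the condition $A(h)=0$ is equivalent to a nonzero $k(x)$-linear relation among the rational functions $R_{i,j}\in k(x,y)$ determined by $\pa_x^i\pa_y^j(h)=R_{i,j}h$. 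Using commutativity of $\pa_x$ and $\pa_y$ together with Example~\ref{EX:apply}, I would compute the $R_{i,j}$ explicitly and then run a dimension count over $k(x)$ (this is the promised ``filtration over $k(x)$ instead of $k$'').

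For part~(i), write $\pa_x^i(h)=P_ih$. Since the $D_x$-certificate $u=y\,\de_x(\beta)/\beta$ is linear in~$y$ with coefficient in~$k(x)$ and $\de_x$ preserves $y$-degree, the recurrence $P_{i+1}=\de_x(P_i)+P_iu$ gives $P_i\in k(x)[y]$ with $\deg_y P_i\le i$. As the $S_y$-certificate $b=\beta\gamma$ satisfies $\si_y^\ell(b)=\beta\,\si_y^\ell(\gamma)$ (using $\beta\in k(x)$), we get $S_y^j(h)=Q_jh$ with $Q_j=\beta^j\prod_{\ell=0}^{j-1}\si_y^\ell(\gamma)$, and by commutativity $R_{i,j}=\si_y^j(P_i)\,Q_j$. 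Multiplying by the common $y$-denominator $\Lambda=\prod_{\ell=0}^{J-1}\si_y^\ell(\den(\gamma))$, each $\Lambda R_{i,j}$ with $i\le I$, $j\le J$ is a polynomial in~$y$ over~$k(x)$ of degree at most $I+gJ$, where $g=\max(\deg_y\num(\gamma),\deg_y\den(\gamma))$. Hence these $\Lambda R_{i,j}$ lie in a $k(x)$-vector space of dimension at most $I+gJ+1$.

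The dimension count is then the crux. There are $(I+1)(J+1)$ operators $\pa_x^i\pa_y^j$ but the span has dimension at most $I+gJ+1$, and $(I+1)(J+1)-(I+gJ+1)=J(I+1-g)$, which is positive as soon as $I\ge g$ and $J\ge 1$. Choosing $I=g$ and $J=1$ already forces a nonzero $k(x)$-linear dependence among the $\Lambda R_{i,j}$, hence among the $R_{i,j}$, producing the desired annihilator $A$ and, via Lemma~\ref{LM:annihilator}, a telescoper of type $(D_x,S_y)$.

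Part~(ii) is handled by the symmetric device, with the $q$-free variable playing the role that $x$ played above. Here $S_x^i(h)=\beta^ih$ with $\beta\in k(y)$, while the $D_y$-certificate $v=\alpha+x\,\de_y(\beta)/\beta$ is linear in~$x$, so $D_y^j(h)=\tilde P_jh$ with $\tilde P_j\in k(y)[x]$ and $\deg_x\tilde P_j\le j$; thus $R_{i,j}=\beta^i\,\si_x^i(\tilde P_j)$ has $x$-degree at most~$j$ and $y$-complexity growing linearly in $i$ and~$j$. After clearing the $y$-denominator, the $R_{i,j}$ again lie in a $k(x)$-space of polynomials in~$y$ whose dimension is bounded by $c_1I+c_2J+c_3$ for constants depending only on $\alpha,\beta$, whereas the operator count $(I+1)(J+1)$ is quadratic; taking $I=J=N$ large makes the difference grow like $N^2$, forcing a dependence and hence a telescoper of type $(S_x,D_y)$ through Lemma~\ref{LM:annihilator}. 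The main obstacle I anticipate is precisely this denominator-and-degree bookkeeping: one must verify that the span's $k(x)$-dimension grows only \emph{linearly} in $(I,J)$. This rests on the structural fact, supplied by the multiplicative decomposition of Figure~\ref{FIG:decomp}, that one certificate is essentially univariate while the other depends on the non-telescoped variable only polynomially and of low degree, so that the dimension grows strictly slower than the number of available operators.
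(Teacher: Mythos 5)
Your proposal is correct and follows essentially the same route as the paper: both reduce to Lemma~\ref{LM:annihilator} and obtain the annihilator by a $k(x)$-linear dimension count, comparing the quadratically many operators $\pa_x^i\pa_y^j$ against a space of cleared-denominator images whose $y$-degree (hence $k(x)$-dimension) grows only linearly, thanks to the univariate form of the certificates. The only differences are cosmetic --- you use a rectangular index set $i\le I$, $j\le J$ with explicit small bounds where the paper uses the triangular filtration $i+j\le N$ --- and your degree bookkeeping for both parts matches the paper's.
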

\begin{proof}
To show the first assertion, we let
\[ a=\num(\beta), \,\, b=\den(\beta), \,\, s=\num(\gamma), \,\, t=\den(\gamma), \]
and
\[ u=\num\left(\frac{\de_x(\beta)}{\beta}\right), \,\,
v=\den\left(\frac{\de_x(\beta)}{\beta} \right).\]
A straightforward calculation yields that the $\de_x$-certificate and~$\si_y$-certificate of~$h$ are,
\[  y\frac{\de_x(\beta)}{\beta} = \frac{yu}{v} \quad \text{and} \quad  \beta\gamma = \frac{as}{bt}, \]
respectively. Note that~$a, b, u, v \in k[x]$ and~$s, t \in k[y]$.

Let~$\mathcal {F}_{N}$ be the linear subspace spanned by~$\{D_x^iS_y^j\mid i+j\leq N \}$
over~$k(x)$. Let $\mu$ be the maximum of the degrees in~$y$ of~$s$ and~$t$, and let
\[   \cW_N = \spa_{k(x)} \left\{ \frac{y^\ell h}{t(y+N-1)\cdots t(y)} \,\mid \,\ell \leq (\mu+1) N \right\}. \]
An easy induction on~$i$ and~$j$ yields
\[ D_x^iS_y^j(h)=\frac{w(x, y) h}{t(y+j-1)\cdots t(y)},~\mbox{where~$w \in k(x)[y]$ and~$\deg_y(w)\leq j\mu+i.$} \]
Hence,~$D_x^i S_y^j(h)$ belongs to~$\cW_N$ if~$i+j \leq N$.
Accordingly, there is a well-defined $k(x)$-linear map~$\phi_N$ from~$\cF_N$ to~$\cW_N$ that sends~$L$ to~$L(h)$
for all~$L \in \cF_N$.
Since the dimension of~$\mathcal {F}_N$ over~$k(x)$ is~$\binom{N+2}{2}$, while that of~$\mathcal{W}_{N}$ is~$(\mu+1) N+1$,
the kernel of~$\phi_N$ is nontrivial when~$N$ is sufficiently large.
Let~$A$ be a nonzero element of~$\ker(\phi_N)$. Then~$A(h)=0$. The first assertion follows
from Lemma~\ref{LM:annihilator}.

To show the second assertion, we let~$a=\num(\beta)$, $b=\den(\beta)$,
\[u=\num\left(x\frac{\de_y(\beta)}{\beta} + \alpha\right) \quad \text{and} \quad
v=\den\left(y\frac{\delta_y(\beta)}{\beta} + \alpha\right).\]
A straightforward calculation yields that the~$y$-certificate of~$h$ is
\[ x\frac{\de_y(\beta)}{\beta} + \alpha = \frac{u(x, y)}{v(y)}.\]
Note that~$a, b, v \in k[y]$, $u \in k[x, y]$.

Consider the linear
space~$\mathcal {F}_{N}$ spanned by~$\{S_x^iD_y^j\mid i+j\leq N \}$
over~$k(x)$. Let $\mu$ be the maximum of the degrees in~$y$ of
$a, b, u$ and~$v$, and let
\[   \cW_N = \spa_{k(x)} \left\{ \frac{y^\ell
h}{(bv)^N} \,\mid \, \ell\leq 2\mu N\right\}. \]
An easy induction on~$i$ and~$j$ yields
\[ S_x^iD_y^j(h)=\frac{w(x,
y)}{b^iv^j}h, \quad~\mbox{where~$w \in k(x)[y]$ with~$\deg_y(w)\leq (i+j)\mu$}. \]
Hence, $S_x^iD_y^j(h)$ belongs to~$\cW_N$ if~$i+j \leq N$.
Accordingly, there is well-defined a $k(x)$-linear map~$\psi_N$ from~$\cF_N$ to~$\cW_N$ that sends~$L$ to~$L(h)$
for all~$L \in \cF_N$.
Since the dimension
of~$\mathcal {F}_N$ over~$k(x)$ is~$\binom{N+2}{2}$, while that of~$\mathcal{W}_{N}$ is~$2\mu N+1$,
the kernel of~$\psi_N$ is nontrivial when~$N$ is sufficiently large.
Let~$A$ be a nonzero element in~$\ker(\psi_N)$. Then~$A(h)=0$.
The second assertion follows
from Lemma~\ref{LM:annihilator}.
\end{proof}

We are ready to show the main conclusion of this subsection.
\begin{theorem} \label{TH:fund}
Let~$h$ be a mixed term over~$\mfield$. If~$h$ is proper, then it is telescopable of type~$(\pa_x, \pa_y)$.
\end{theorem}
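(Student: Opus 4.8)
The plan is to reduce a proper term to a univariate ``core'', for which telescopability is either already proved or immediate, and then transport telescopability back up through the closure lemmas. By Definition~\ref{DEF:properness} and Lemma~\ref{LM:den} I may fix a structural decomposition $h=fh'$ with $\den(f)$ split and $h'\in\cH(u,v)$, where $u,v$ have the univariate shape of Figure~\ref{FIG:decomp}. Writing $f=f_1(x)f_2(y)p(x,y)$ as its split part times its non-split part, the splitness of $\den(f)$ forces $p\in k[x,y]$. The univariate factors $f_1\in k(x)$ and $f_2\in k(y)$, together with the pure-$x$ summand or factor of the $\xi_x$-certificate, can all be removed from the problem: $f_1$ and $f_2$ are absorbed into the certificates by Lemma~\ref{LM:hrule} (this only rewrites $\alpha,\beta,\gamma$ and preserves the univariate shape), while the pure-$x$ part of the $\xi_x$-certificate is split off as a factor $E$ with $\Delta_y(E)=0$ and reinstated at the very end by Lemma~\ref{LM:product} (applied also to $f_1$, since $\Delta_y(f_1)=0$). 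After these reductions it suffices to prove that $p\,h_0$ is telescopable, where $p\in k[x,y]$ and $h_0$ is a core term --- either one of the two forms of Lemma~\ref{LM:sproper} or a fully separated term $h_0\in\cH(\alpha,\beta)$.

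For the four cases involving a $q$-shift operator, the core $h_0$ is fully separated and the $\xi_x$-certificate lies in $k(x)$, and I would exhibit an annihilator of $p\,h_0$ directly in $k(x)\langle\pa_x\rangle$; this is a telescoper with the right-hand side of~\eqref{EQ:telescoper} equal to zero. When $\xi_x=\de_x$ with certificate $\alpha\in k(x)$, induction gives $(D_x-\alpha)^m(p\,h_0)=\de_x^m(p)\,h_0$, which vanishes for $m>\deg_x p$. When $\xi_x\in\{\si_x,\tau_x\}$, each $\pa_x^i(p\,h_0)$ is a $k(x)$-multiple of $\xi_x^i(p)\,h_0$, and the shifts $\xi_x^i(p)$ span a $k(x)$-space of dimension at most $\deg_y p+1$ inside $\spa_{k(x)}\{y^\ell h_0\}$; a linear dependence among $\pa_x^0(p\,h_0),\dots,\pa_x^{\deg_y p+1}(p\,h_0)$ then produces a nonzero $L\in k(x)\langle\pa_x\rangle$ with $L(p\,h_0)=0$.

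In the two genuinely mixed cases $(\de_x,\si_y)$ and $(\si_x,\de_y)$, the core $h_0$ is exactly of the form treated in Lemma~\ref{LM:sproper}, so $h_0$ is telescopable; what remains is the multiplication by~$p$. Expanding $p=\sum_j p_j(x)y^j$, Lemma~\ref{LM:product} (on the scalars $p_j\in k(x)$) together with closure under sums reduces the task to showing that each $y^j h_0$ is telescopable. I would get this by re-running the filtration argument of Lemma~\ref{LM:sproper} on $y^j h_0$: properness keeps the denominators of $\pa_x^i\pa_y^l(y^j h_0)$ equal to products of boundedly many shifts of the fixed $y$-denominators, so the target space $\cW_N$ has dimension linear in~$N$ while $\cF_N$ has dimension $\binom{N+2}{2}$; a nonzero kernel element annihilates $y^j h_0$, and Lemma~\ref{LM:annihilator}, applicable because $\eta_y\in\{\de_y,\si_y\}$, converts it into a telescoper.

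The step I expect to be the main obstacle is exactly the non-split polynomial factor~$p$. It cannot be handled by the closure lemmas alone: if the $\eta_y$-certificate happens to lie in $k(x)$, then $\spa_{k(x)}\{\Delta_y^i(h_0)\}$ collapses to $k(x)\,h_0$ and does not contain $y^j h_0$, so Lemma~\ref{LM:closure} is powerless. This is what forces the extension of the linear-algebra and dimension-count of Lemma~\ref{LM:sproper} to polynomial multiples, and it is precisely the splitness of $\den(f)$ that keeps the relevant target space of dimension linear, rather than quadratic, in~$N$, and hence strictly smaller than the space of operators.
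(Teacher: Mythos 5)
Your proposal is correct and follows essentially the same route as the paper: take a structural decomposition with split denominator, expand the non-split polynomial numerator in powers of~$y$, reduce to the core terms of Lemma~\ref{LM:sproper} in the two genuinely mixed cases (and to terms annihilated by an operator in $\pa_x$ alone in the four cases involving a $q$-shift), and reassemble via Lemmas~\ref{LM:product} and~\ref{LM:closure}. The only differences are organizational: the paper absorbs the powers $y^i$ and the univariate denominators into the certificates via Lemma~\ref{LM:hrule} so that Lemma~\ref{LM:sproper} applies verbatim to each summand $H_i$ (and so that $\Delta_x(H_i)=0$ in the $q$-shift cases), whereas you re-run the dimension count on $y^jh_0$ and construct the annihilator of $p\,h_0$ directly.
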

\begin{proof}
In the mixed setting, there are six cases to be considered.

First, we  assume that~$\xi_x=\de_x$ and~$\eta_y = \si_y$,
and prove that~$h$ has a telescoper of type~$(D_x, S_y)$.
By the conclusion in the first row of Figure~\ref{FIG:decomp} and Lemma~\ref{LM:den},
$h$~has a structural
decomposition
\[h\in \frac{A}{B\cdot C} \cH\left(y\frac{\de_x(\beta)}{\beta} + \alpha, \beta \gamma\right), \]
where~$A\in k[x, y], B\in k[x], C\in k[y]$, $\alpha, \beta\in k(x)$ and~$\gamma\in k(y)$.
By Lemma~\ref{LM:hrule}~(i),
\[  h = A U V, \quad \text{where } U \in \cH\left(\alpha + \frac{\de_x(1/B)}{1/B}, 1 \right),\ \text{and}\
V \in \cH\left(y\frac{\de_x(\beta)}{\beta}, \beta \frac{\si_y(1/C)}{1/C}\gamma\right). \]
Write~$A=\sum_{i=0}^m a_iy^i$ with~$a_i\in k[x]$.
Then we have $h = \sum_{i=0}^m G_iH_i$, where
\[G_i=a_iU\in \cH\left(\alpha + \frac{\de_x(a_i/B)}{a_i/B}, 1 \right)
\ \text{and}\
H_i=y^iV \in \cH\left(y\frac{\de_x(\beta)}{\beta}, \beta \frac{\si_y(y^i/C)}{y^i/C}\gamma\right).\]
Note that
\[ \alpha + \frac{\de_x(a_i/B)}{a_i/B}\in k(x) \quad \text{and} \quad
\frac{\si_y(y^i/C)}{y^i/C}\gamma\in k(y). \]
By Lemma~\ref{LM:sproper}~(i), the term $H_i$ has a telescoper of type~$(D_x, S_y)$ for all~$i$ with~$0\leq i \leq m$.
Since~$(S_y-1)(G_i)=0$, $G_iH_i$ has a telescoper of type~$(D_x, S_y)$
for all~$i$ with~$0\leq i \leq m$ by Lemma~\ref{LM:product}.
So~$h$ has a telescoper of the same type by Lemma~\ref{LM:closure}.
This completes the proof for the first case.

Second, we assume that~$\xi_x=\si_x$ and~$\eta_y = \de_y$, and show that~$h$ has a telescoper of type~$(S_x, D_y)$.
By the conclusion in the fourth row of Figure~\ref{FIG:decomp} and Lemma~\ref{LM:den},
$h$~has a structural decomposition
\[h \in \frac{A}{B\cdot C} \cH\left(\beta\gamma,  \alpha+x\frac{\de_y(\beta)}{\beta}\right), \]
where~$A\in k[x, y], B\in k[x], C\in k[y]$, $\alpha, \beta\in k(y)$ and~$\gamma\in k(x)$.

Write~$A=\sum_{i=0}^m a_iy^i$ with~$a_i\in k[x]$. A similar consideration as above leads to
\[h = \sum_{i=0}^m G_iH_i, \quad \text{where~$G_i\in \cH\left(\gamma \frac{\si_x(a_i/B)}{a_i/B}, 0\right)$
and~$H_i \in \cH\left(\beta, \alpha + x\frac{\de_y(\beta)}{\beta}\right)$}.\]
Note that~$\gamma \si_x(a_i/B)/(a_i/B) \in k(x)$ and~$\alpha\in k(y)$.
By Lemma~\ref{LM:sproper}~(ii), the term $H_i$ has a telescoper of type~$(S_x, D_y)$ for all~$i$ with~$0\leq i \leq m$.
Since~$D_y(G_i)=0$, $G_iH_i$ has a telescoper
of type~$(S_x, D_y)$ by Lemma~\ref{LM:product},
so does $h$ by Lemma~\ref{LM:closure}.

In the other four cases,~$(\xi_x, \eta_y)$ is equal to~$(\de_x, \tau_y)$, $(\si_x, \tau_y)$,
$(\tau_x, \delta_y)$ and~$(\tau_x, \si_y)$, respectively.
The presence of $q$-shift operators leads to simpler structural decompositions, which
enable us prove the existence of telescopers in a fairly straightforward way.

By Lemma~\ref{LM:den} and the conclusions in the second, third, fifth
and sixth rows of Figure~\ref{FIG:decomp},
$h$~has a structural decomposition
\[h \in \frac{A}{B\cdot C} \cH\left(\alpha, \beta\right), \]
where~$A \in k[x,y]$, $B \in k[x]$, $C \in k[y]$, $\alpha\in k(x)$ and~$\beta\in k(y)$.

Write~$A=\sum_{i=0}^m a_i y^i$ with~$a_i\in k[x]$.
It follows from Lemma~\ref{LM:hrule} that
\[  h = \sum_{i=0}^m G_i H_i, \quad
\mbox{where $G_i \in \frac{a_i}{B}\cH(\alpha, 1)$ and~$H_i \in \frac{y^i}{C}\cH(1, \beta).$}
\]
By Lemma~\ref{LM:closure}, it suffices to show that, for all~$i$ with~$0\leq i \leq m$, $G_i H_i$
has a telescoper of type~$(\pa_x, \pa_y)$.
Since~$\Delta_y(G_i)=0$, it suffices to show that~$H_i$ has a telescoper
by Lemma~\ref{LM:product}.
Since~$\Delta_x(H_i)=0$,~$\Delta_x$ is a telescoper of~$H_i$.
\end{proof}

\subsection{Characterization of telescopable terms} \label{SUBSECT:nece}
We have seen that exact terms and proper ones are all telescopable.
We now show that a telescopable term is the sum of an exact term and a proper one.
To this end, we present a few useful lemmas.
\begin{lemma} \label{LM:ns}
Let~$h, h_1$ and~$h_2$ be three mixed terms over~$\mfield$.
Assume that~$h = \Delta_y(h_1) + h_2$.
Then~$L$ is a telescoper of type~$(\pa_x, \pa_y)$ for~$h$
if and only if it is a telescoper for~$h_2$.
\end{lemma}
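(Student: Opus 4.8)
The plan is to exploit the fact that a telescoper $L\in k(x)\langle \pa_x\rangle$ involves only~$\pa_x$ and coefficients in~$k(x)$, and hence \emph{commutes} with the operator~$\Delta_y$. I would first isolate this as the central observation: writing $L=\sum_i c_i\,\pa_x^i$ with $c_i\in k(x)$, the operator $\xi_x$ commutes with $\eta_y$, and since $\Delta_y=\eta_y-\eta_y(1)$ is a $k$-linear polynomial in~$\eta_y$ (with $\eta_y(1)\in\{0,1\}\subset k$), $\xi_x$ commutes with~$\Delta_y$; moreover each $c_i\in k(x)$ is fixed by~$\eta_y$, so $\Delta_y$ is $k(x)$-linear. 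Combining the two facts gives $L(\Delta_y(w))=\Delta_y(L(w))$ for every $w\in\cR$. Applying~$L$ to $h=\Delta_y(h_1)+h_2$ then produces the single identity $L(h)=\Delta_y(L(h_1))+L(h_2)$, on which the whole equivalence rests.

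Before using it, I would dispose of the degenerate case: if $\Delta_y(h_1)=0$ then $h=h_2$ and the claim is immediate. Otherwise $\Delta_y(h_1)$ is a nonzero mixed term, so $h=\Delta_y(h_1)+h_2$ is an equation among mixed terms and the remark following Lemma~\ref{LM:similar} forces $h$, $\Delta_y(h_1)$ and $h_2$ to be similar; together with $\Delta_y(h_1)\sim h_1$ (Lemma~\ref{LM:similar}(ii)) this makes $h$, $h_1$ and $h_2$ all similar. I would also record that $L(h_1)$ is, by Lemma~\ref{LM:similar}(ii), either zero or a mixed term similar to~$h_1$.

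For the two implications I would read off the displayed identity. If $L$ is a telescoper for~$h_2$, say $L(h_2)=\Delta_y(g_2)$ with $g_2$ a mixed term, then $L(h)=\Delta_y\bigl(L(h_1)+g_2\bigr)$. Symmetrically, if $L$ is a telescoper for~$h$, say $L(h)=\Delta_y(g)$, then $L(h_2)=\Delta_y\bigl(g-L(h_1)\bigr)$. In each direction the candidate witness is $L(h_1)+g_2$ (respectively $g-L(h_1)$), so it remains only to check that this element is zero or a mixed term, which is exactly what Definition~\ref{DEF:telescoper} demands of a witness.

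The main obstacle is precisely this last verification, because a sum of two mixed terms is again a mixed term (or zero) only when the summands are \emph{similar} (Lemma~\ref{LM:similar}(i)). Here $L(h_1)$ is similar to~$h_1$ (or zero), while $g_2$ is similar to~$h_2$ since $\Delta_y(g_2)=L(h_2)$ is similar to~$h_2$ and $g_2\sim\Delta_y(g_2)$; likewise $g\sim h$. Invoking the chain $h_1\sim h_2\sim h$ established above, the two summands are similar, so $L(h_1)+g_2$ and $g-L(h_1)$ are each zero or a mixed term, yielding the required witnesses and closing both implications. The only point needing a word of care is the boundary situation $L(h)=0$ (respectively $L(h_2)=0$), where the relevant $\Delta_y(\cdot)$ vanishes; there one of the candidate witnesses may be taken directly as $\pm L(h_1)$, and an identically vanishing witness can be replaced by the constant mixed term~$1$, for which $\Delta_y(1)=0$.
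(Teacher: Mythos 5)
Your proof is correct and follows essentially the same route as the paper: both rest on the commutation of $L$ with $\Delta_y$ and exhibit the witness $g-L(h_1)$ (resp.\ $L(h_1)+g_2$). The only difference is that you carefully verify, via Lemma~\ref{LM:similar}, that these witnesses are indeed zero or mixed terms --- a point the paper's two-line proof leaves implicit.
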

\begin{proof}
If~$h$ has a telescoper~$L$ in~$k(x)\langle \pa_x \rangle$, then~$L(h)= \Delta_y(g)$ for some
mixed term~$g$. Since~$L$ and~$\Delta_y$ commute with each other, $L(h_2)=\Delta_y(g-L(h_1))$.
So~$L$ is a telescoper for~$h_2$.  The converse can be proved by the same argument.
\end{proof}

The following lemma is a $q$-analogue of Theorem~7 in~\citep{AbramovPetkovsek2002a}.
\begin{lemma} \label{LM:split}
Let~$p$ be an irreducible polynomial in~$k[x, y]$. Then

\smallskip
\begin{enumerate}
  \item [(i)] If there exist~$i, j\in \bZ$ with~$i\neq 0$ and~$c\in k$ such that~$\si_x^i\tau_y^j(p) = cp$, then~$p\in k[y]$.
  \item [(ii)] If there exist~$i, j\in \bZ$ with~$j\neq 0$ and~$c\in k$ such that~$\si_x^i\tau_y^j(p) = cp$, then~$p=\lambda y$ for some~$\lambda \in k$ or~$p\in k[x]$.
\end{enumerate}
\end{lemma}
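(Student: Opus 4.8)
The plan is to expand $p$ as a polynomial in $y$ with coefficients in $k[x]$ and to compare coefficients after applying the automorphism $\si_x^i\tau_y^j$. Writing $p=\sum_{k=0}^{d}p_k(x)\,y^k$ with $p_k\in k[x]$, one has $\si_x^i\tau_y^j(p)=\sum_{k}q^{jk}\,p_k(x+i)\,y^k$, so the hypothesis $\si_x^i\tau_y^j(p)=cp$ is equivalent to the family of univariate relations
\[ q^{jk}\,p_k(x+i)=c\,p_k(x),\qquad 0\le k\le d. \]
Everything is then read off from these relations together with two elementary facts: the shift $x\mapsto x+i$ preserves the leading coefficient in~$x$ of any polynomial, and, since $q$ is not a root of unity, the map $k\mapsto q^{jk}$ is injective on~$\bZ$ whenever $j\neq 0$.

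For part~(i) I use that $i\neq 0$. Fixing an index $k$ with $p_k\neq 0$ and comparing leading coefficients in~$x$ on the two sides of the relation above: since the shift does not alter the leading coefficient of~$p_k$, this comparison yields $c=q^{jk}$, and the relation then collapses to $p_k(x+i)=p_k(x)$. A polynomial fixed by the nonzero shift $x\mapsto x+i$ is constant (its difference with its value at the origin has infinitely many roots), so every $p_k$ lies in~$k$ and hence $p\in k[y]$. Note that irreducibility is not needed for this half.

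For part~(ii), where $j\neq 0$, I first extract $c=q^{jk}$ for each surviving index~$k$: this is immediate from the relation when $i=0$, and it again follows from the leading-coefficient comparison when $i\neq 0$. Since $k\mapsto q^{jk}$ is injective, at most one index~$k$ can survive, so $p=p_k(x)\,y^k$ is a single $y$-monomial with $p_k\in k[x]$. Now irreducibility finishes the argument: $k\ge 2$ would force $y^2\mid p$ and make $p$ reducible, so $k\in\{0,1\}$; the case $k=0$ is precisely $p\in k[x]$, whereas $k=1$ forces $p_k$ to be a nonzero constant, i.e.\ $p=\lambda y$ with $\lambda\in k^{*}$.

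I expect the difficulty to be organizational rather than conceptual. The two points requiring care are the separate treatment of $i=0$ in part~(ii)---there is no leading-coefficient argument available in that branch, and it is exactly there that the alternative $p\in k[x]$ genuinely arises---and the precise invocation of irreducibility to pass from the monomial description $p=p_k(x)\,y^k$ to the stated alternatives. One must also ensure that the injectivity of $k\mapsto q^{jk}$ is justified solely through the standing assumption that $q$ is not a root of unity, since that is what rules out several coexisting powers of~$y$.
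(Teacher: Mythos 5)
Your proof is correct and follows essentially the same route as the paper's: expand $p$ in powers of~$y$, derive $q^{jk}p_k(x+i)=cp_k(x)$, extract $c=q^{jk}$ by comparing leading coefficients in~$x$, and use that $q$ is not a root of unity. The only differences are organizational streamlining: you prove inline that a shift-invariant polynomial is constant (where the paper cites Lemma~2 of Abramov--Petkov\v{s}ek) and you handle the $i\neq 0$ branch of part~(ii) directly via the leading-coefficient comparison, whereas the paper first reduces it to $i=0$ by invoking part~(i).
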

\begin{proof}
For the first assertion, assume $\si_x^i\tau_y^j(p) = cp$ for $i,j \in
\bZ$ and~$c \in k$, with~$i \neq 0$.  We consider two cases.  First,
if $j=0$ or~$\deg_y(p)=0$, then $\si_x^i(p) = cp$, so that $c=1$ by
comparing the leading coefficients.  Upon setting~$K = k(y)$ and
applying Lemma~2 of~\citep{AbramovPetkovsek2002a}, we conclude that~$p \in
K$, and then~$p \in k[y]$.
For the second case, we assume
$d=\deg_y(p)>0$ and~$j\neq 0$.  Write~$p = p_d(x) y^d + \cdots +
p_0(x)$, where~$p_0, \ldots, p_d \in k[x]$ and~$p_d\neq 0$. Then
\[\si_x^i \tau_y^j(p) = p_d(x+i)q^{jd}y^d + \cdots + p_0(x+i).  \]
From the equality~$\si_x^i\tau_y^j(p) = cp$, we get
\[p_\ell(x+i)q^{j\ell} = c p_\ell(x) \quad \text{for all~$\ell$ with~$0\leq \ell \leq d$.}\]
Assume that~$\ell$ is an integer in~$\{0, \ldots, d\}$ such
that~$p_\ell(x) \neq 0$.  Then~$c = q^{j\ell}$ by the above equation.
This implies that~$p_\ell(x+i) = p_\ell(x)$. By Lemma~2
in~\citep{AbramovPetkovsek2002a}, $p_\ell\in k$ for all~$\ell$
with~$p_\ell(x) \neq 0$. Thus,~$p\in k[y]$ again, and the first
assertion holds.

To prove the second assertion, we first show that the case~$i \neq 0$
reduces to the case~$i = 0$.  Indeed, if $\si_x^i\tau_y^j(p) = cp$ for
$i,j \in \bZ$ and~$c \in k$, and if~$i,j \neq 0$, the first assertion
of the lemma implies that~$p \in k[y]$.  As a consequence,
$\si_x^0\tau_y^j(p) = cp$~as well.  So we now focus to the special
case~$i = 0$, that is~$\tau_y^j(p)=cp$.
Suppose that there are~$d_1$ and~$d_2$ in~$\bN$ with~$d_1> d_2$ such that
\[p = p_{d_1} y^{d_1} + p_{d_2} y^{d_2} + \, \text{terms of lower degree in~$y$,}\]
where~$p_{d_1}, p_{d_2}\in k[x]^*$. Applying~$\tau_y^j$ to~$p$ yields
\[\tau_y^j(p) = p_{d_1} q^{jd_1}y^{d_1} + p_{d_2} q^{jd_2}y^{d_2} + \, \text{lower terms in~$y$.}\]
The equality~$\tau_y^j(p)=cp$ implies that~$c=q^{jd_1}$ and~$c=q^{jd_2}$. Hence, $q^{j(d_1-d_2)}=1$, a contradiction to the assumption
that~$q$ is not a root of unity. So~$p=\lambda y^s$ for some~$\lambda$ in~$k[x]$ and~$s$ in~$\bN$.
Since $p$~is irreducible, $s$ is equal to~$0$ or~$1$, and if~$s=1$,
then~$p \in k$, again because it is irreducible.  Therefore, the
second assertion holds.
\end{proof}

The following lemma is an analogue for the difference-$q$-difference
case of Theorem 9 in~\citep{Abramov2003}.
\begin{lemma} \label{LM:dqd}
Let~$f =a/b \in k(x, y)^*$ with~$\gcd(a,b)=1$, let~$b$ be non-split,
and let~$L$ be in~$k(x, y)\langle \pa_x\rangle$ whose coefficients are all split.
Then
\begin{enumerate}
  \item [(i)] If~$b$ is shift-free~\wrt~$\si_y$, and~$\pa_x = T_x$, then the non-split part of
the denominator of~$L(f)$ is not~$\si_y$-spread.
  \item [(ii)] If~$b$ is $q$-shift-free~\wrt~$\tau_y$, and~$\pa_x = S_x$, then the non-split part of
the denominator of~$L(f)$ is not~$\tau_y$-spread.
\end{enumerate}
\end{lemma}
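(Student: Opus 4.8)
The plan is to exhibit, in each case, a single non-split irreducible factor of $\den(L(f))$ none of whose nonzero $\nu_y$-shifts divide $\den(L(f))$, where $\nu_y=\si_y$ in~(i) and $\nu_y=\tau_y$ in~(ii); by Definition~\ref{DEF:spread} this is exactly the failure of $\nu_y$-spreadness of the non-split part. Write $L=\sum_{i=0}^n c_i\pa_x^i$ with $c_n\neq 0$ and every $c_i$ split, and set $\mu_x=\tau_x$ in~(i) and $\mu_x=\si_x$ in~(ii), so that $L(f)=\sum_{i=0}^n c_i\,\mu_x^i(a)/\mu_x^i(b)$. Let $B$ be the non-split part of $b$. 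Since each $c_i$ is split and $\mu_x$ acts only on $x$, the non-split part of $\den\bigl(c_i\mu_x^i(f)\bigr)$ is exactly $\mu_x^i(B)$ (the numerator $\mu_x^i(a)$ is coprime to $\mu_x^i(b)$, and the split $c_i$ cannot cancel a non-split factor). Hence the non-split part of $\den(L(f))$ divides $\lcm_{0\le i\le n}\mu_x^i(B)$, and every non-split irreducible factor of $\den(L(f))$ has the form $\mu_x^i(P)$ for a non-split irreducible factor $P$ of $b$ and some $0\le i\le n$.

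The structural input is freeness of the combined shift action. Fix a non-split irreducible factor $P$ of $b$. As $\mu_x$ and $\nu_y$ act on different variables and commute, $\mu_x^{\,m}\nu_y^{\,j}(P)=\nu_y^{\,j}\mu_x^{\,m}(P)$. I claim the polynomials $\mu_x^{\,m}\nu_y^{\,j}(P)$, $(m,j)\in\bZ^2$, are pairwise non-associate: if $\mu_x^{\,m}\nu_y^{\,j}(P)=cP$ for some $c\in k$, then, because $P$ depends on both variables, the two parts of Lemma~\ref{LM:split}—or of its $x\leftrightarrow y$ analogue in case~(i)—successively force $m=0$ and $j=0$. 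Consequently the ``$\mu_x$-exponent'' is a well-defined $\bZ$-valued function on the orbit of $P$. Since $b$ has finitely many irreducible factors, I may choose $P^*$, a non-split irreducible factor of $b$ in this orbit, whose $\mu_x$-exponent is \emph{maximal}; equivalently, $\mu_x^{\,k}\nu_y^{\,j}(P^*)\mid b$ forces $k\le 0$.

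Take as witness $Q:=\mu_x^{\,n}(P^*)$. For \emph{survival}, $Q\mid\mu_x^i(B)$ would give $\mu_x^{\,n-i}(P^*)\mid b$ with $n-i\ge 1$, contradicting maximality; so $Q$ divides (to a positive power) only the denominator of the summand $c_n\mu_x^n(f)$, and divides neither its numerator (by $\gcd(\mu_x^n(a),\mu_x^n(b))=1$ and splitness of $c_n$) nor the denominators of the other summands. Adding the summands over a common denominator therefore leaves $Q$ in $\den(L(f))$. For the \emph{absence of $\nu_y$-shifts}, suppose $\nu_y^{\,j}(Q)\mid\lcm_i\mu_x^i(B)$ with $j\neq 0$; then $\mu_x^{\,n-i}\nu_y^{\,j}(P^*)\mid b$ for some $0\le i\le n$. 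If $n-i\ge 1$ this again violates maximality of the $\mu_x$-exponent; if $n-i=0$ it gives $\nu_y^{\,j}(P^*)\mid b$ with $j\neq 0$, contradicting that $b$ is shift-free (case~(i)), resp.\ $q$-shift-free (case~(ii)), with respect to $y$—here one uses that $P^*$ is non-split, whence $y\nmid P^*$ and $P^*\neq\lambda y$. Thus no nonzero $\nu_y$-shift of $Q$ divides $\den(L(f))$, proving that its non-split part is not $\nu_y$-spread.

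The main obstacle is the cancellation bookkeeping: a given irreducible factor may occur in several of the shifted denominators $\mu_x^i(B)$ and be cancelled when the summands are added, so one cannot simply read $\den(L(f))$ off from $\lcm_i\mu_x^i(B)$. The extremal choice of $P^*$ is precisely what isolates $Q=\mu_x^n(P^*)$ to a single summand (guaranteeing survival) while pushing all of its nonzero $\nu_y$-shifts outside the range $\{0,\dots,n\}$ of $\mu_x$-exponents reachable by $L$ (guaranteeing absence of shifts). Lemma~\ref{LM:split} is exactly what makes ``maximal $\mu_x$-exponent'' meaningful, by excluding the coincidences $\mu_x^{\,m}\nu_y^{\,j}(P^*)=cP^*$ among shifts of a genuinely bivariate factor.
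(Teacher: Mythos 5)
Your proof is correct, and it rests on the same three pillars as the paper's: splitness of the coefficients confines the non-split part of $\den(L(f))$ to $\mu_x$-shifts of non-split factors of~$b$; Lemma~\ref{LM:split} rules out coincidences among combined shifts of a genuinely bivariate irreducible; and (q-)shift-freeness of~$b$ delivers the final contradiction. Where you genuinely diverge is in the combinatorial core. The paper chooses its witness factor~$p$ to be extremal only within its $\mu_x$-orbit (no positive $\mu_x$-shift of~$p$ divides~$b$), so when the spreadness hypothesis produces a divisor $\nu_y^{j_0}\mu_x^{\rho-\ell_0}(p)\mid b$ with $\rho-\ell_0>0$, it cannot yet conclude; it must renormalize and iterate, generating a sequence of irreducible factors of~$b$ whose eventual repetition (pigeonhole on the finitely many factors) forces an identity $\nu_y^{J}\mu_x^{K}(p)=cp$ with $K\neq 0$, to which Lemma~\ref{LM:split} is applied only at the very end. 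You instead invoke Lemma~\ref{LM:split} up front to show the full $\bZ^2$-orbit of~$P$ consists of pairwise non-associate irreducibles, which makes the $\mu_x$-exponent a well-defined integer on the orbit and lets you take a single global maximum $P^*$ over all factors of~$b$ in that orbit. This stronger extremal choice handles both the survival of $Q=\mu_x^{\,n}(P^*)$ and the exclusion of all its nonzero $\nu_y$-shifts in one stroke, eliminating the paper's descent-plus-pigeonhole loop entirely. The trade-off is purely one of economy: your argument is shorter and arguably more transparent, while the paper's iteration uses only the weaker (and perhaps more obviously available) one-dimensional extremal choice. One small point worth making explicit in your write-up: the conclusion that $Q$ itself is not cancelled when the summands are combined also needs $c_n\neq 0$ and the coprimality $\gcd\bigl(\mu_x^{\,n}(a),\mu_x^{\,n}(b)\bigr)=1$, which you do state; together with the non-associateness of the orbit elements this guarantees the non-split part of $\den(L(f))$ is nontrivial, so $Q$ is a legitimate witness against $\nu_y$-spreadness in the sense of Definition~\ref{DEF:spread}.
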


\begin{proof} Let~$L = \sum_{i=0}^\rho u_i \pa_x^i$, where~$u_0, u_1, \ldots, u_\rho \in k(x,y)$
are split and~$u_\rho$ is nonzero.

First, assume that~$b$ is shift-free with respect to~$\si_y$, and that~$\pa_x = T_x$.
The first assertion clearly holds if~$\rho = 0$, because a nontrivial and shift-free polynomial with respect to~$\si_y$
is not $\si_y$-spread. So it suffices to consider the case in which~$\rho>0$.

Applying~$L$ to~$f$, we have that
\begin{equation} \label{EQ:applyq}
L(f) = \sum_{i=0}^\rho u_i \tau_x^i\left(\frac{a}{b} \right).
\end{equation}
Denote by~$B$ the non-split part of the denominator of~$L(f)$, which belongs to~$k[x,y]$.
If~$B$ is in~$k$, then the first assertion evidently holds. Assume further that~$B$ has positive
degrees in both~$x$ and~$y$. By~\eqref{EQ:applyq}, $b$ must have a nontrivial irreducible factor,
which is non-split.
Let~$p$ be such a factor.
Furthermore, we may assume without loss of generality that~$\tau_x^\mu(p)\nmid b$ for any positive integer~$\mu$.
Then~$\tau_x^\rho(p) \mid B$, because all the~$u_i$'s are split.

Suppose that~$B$ is~$\si_y$-spread. Then there exists~$j_0 \in \bZ^*$ such that~$\si_y^{j_0}\tau_x^\rho(p)\mid B$.
By~\eqref{EQ:applyq},~$B$ divides~$\prod_{i=0}^\rho \tau_x^i(b)$. By the above two divisibilities,
there exists~$\ell_0$ in~$\{0, \ldots, \rho\}$ such that~$\si_y^{j_0}\tau_x^\rho(p)\mid \tau_x^{\ell_0}(b)$.
Note that~$\ell_0 \neq \rho$, for otherwise, both~$\si_y^{j_0}(p)$ and~$p$ would divide~$b$,
a contradiction to the assumption that~$b$ is shift-free.
Therefore,~$\ell_0 < \rho$. Since~$\si_y^{j_0}\tau_x^{\rho-\ell_0}(p)\mid b$,
there exists a non-negative integer~$i_0$ such that
\[\si_y^{j_0}\tau_x^{\rho-\ell_0+i_0}(p)\mid b \quad \text{but} \quad \si_y^{j_0}\tau_x^{\rho-\ell_0+i_0+\mu}(p)\nmid b\]
for any positive integer~$\mu$.
It follows from~$\rho-\ell_0>0$ that~$\rho - \ell_0 + i_0 > 0$.

Repeating the above process for the irreducible factor~$\si_y^{j_0}\tau_x^{\rho-\ell_0+i_0}(p)$,
we can find that~$j_1 \in \bZ^*$,~$i_1 \in \bN$ and~$\ell_1 \in \{0, \ldots, \rho-1\}$
such that~$\rho -\ell_1 + i_1 > 0$ and
\[\si_y^{j_0+j_1}\tau_x^{(\rho-\ell_0+i_0) + (\rho-\ell_0+i_1)}(p) \mid b \quad \text{but} \quad
\si_y^{j_0+j_1}\tau_x^{(\rho-\ell_0+i_0) + (\rho-\ell_1+i_1) + \mu}(p) \nmid b\]
for any positive integer~$\mu$. Continuing this process yields a sequences of irreducible factors of~$b$. Since~$b$ has only finitely many irreducible
factors, there exist~$m, n\in \bN$ with~$n<m$ such that
\[ \si_y^{j_0+ \cdots + j_n}\tau_x^{(\rho-\ell_0+i_0) + \cdots + (\rho-\ell_n+i_n)}(p)=
c  \si_y^{j_0+ \cdots + j_m}\tau_x^{(\rho-\ell_0+i_0) + \cdots + (\rho-\ell_m+i_m)}(p)\]
for some~$c\in k$. This implies that
\[\si_y^{-j_{n+1}-\cdots - j_m}\tau_x^{-(\rho-\ell_{n+1}+i_{n+1}) - \cdots - (\rho-\ell_m+i_m)}(p) = cp.\]
Note that~$(\rho-\ell_{n+1}+i_{n+1}) + \cdots + (\rho-\ell_m+i_m) \neq
0$. By Lemma~\ref{LM:split}~(ii), $p$~is either in~$k[x]$ or is equal
to~$\lambda y$ for some~$\lambda \in k$, a contradiction to the
assumption that~$p$ is non-split.

The second assertion can be proved similarly according to Lemma~\ref{LM:split}~(i).
\end{proof}

We are ready to prove that a telescopable term is the sum of an exact term and a proper one.
\begin{theorem}\label{THM:chartele}
Let~$h$ be a mixed term over~$\mfield$.
Let
$$h = \Delta_y(h_1) + h_2$$
be an additive decomposition with respect to~$\pa_y$.
If~$h$ has a telescoper of type~$(\pa_x, \pa_y)$, then~$h_2$ is either zero or proper.
\end{theorem}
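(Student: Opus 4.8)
The plan is to reduce to the nonzero case and argue by contradiction on non-properness, playing the \emph{spread} conclusion of Proposition~\ref{PROP:sum} against the freeness built into the additive decomposition. Since $h = \Delta_y(h_1) + h_2$, Lemma~\ref{LM:ns} shows that $h_2$ inherits a telescoper $L \in k(x)\langle \pa_x\rangle$ of type $(\pa_x, \pa_y)$, so $L(h_2) = \Delta_y(g)$ for some mixed term $g$; that is, $L(h_2)$ is exact \wrt~$\pa_y$. If $h_2 = 0$ we are done, so I would assume $h_2 \neq 0$ and write $h_2 \in r\cH(u,v)$ as in~\eqref{eq:additive-decomposition}. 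By Definition~\ref{DEF:properness} and Lemma~\ref{LM:den}, the properness of $h_2$ is equivalent to $\den(r)$ being split: indeed, since $v$ is $\eta_y$-reduced, Lemma~\ref{LM:reduced} guarantees that the certificate pair $(u,v)$ meets the splitness hypotheses required in the table of Lemma~\ref{LM:den}. I would therefore suppose, for contradiction, that $\den(r)$ has a non-split irreducible factor.

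The next step is to apply the telescoper. By Lemma~\ref{LM:similar}, $L(h_2)$ is either $0$ or similar to $h_2$, so Example~\ref{EX:apply} lets me write $L(h_2) = \tilde{L}(r)\,\cH(u,v)$ for an operator $\tilde{L} \in k(x,y)\langle \pa_x\rangle$ assembled from $L$ and the certificate $u$. Because $v$ is $\eta_y$-reduced, Lemma~\ref{LM:reduced} furnishes exactly the splitness needed to invoke Proposition~\ref{PROP:sum}, namely $\den(v)$ split when $\eta_y = \delta_y$ and $v$ split when $\eta_y \in \{\si_y, \tau_y\}$. Since $L(h_2)$ is exact \wrt~$\pa_y$, Proposition~\ref{PROP:sum} then forces the non-split part of $\den(\tilde{L}(r))$ to be $\eta_y$-spread. (If $\tilde L(r) = 0$, then $L(h_2) = 0$ and a separate, easy argument applies.) The contradiction will come from showing that this non-split part cannot be $\eta_y$-spread.

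Finally, I would establish the failure of spreadness case by case. When $\xi_x = \delta_x$ (cases $(\delta_x, \si_y)$ and $(\delta_x, \tau_y)$), the denominator estimate of Example~\ref{EX:apply} shows $\den(\tilde{L}(r))$ divides a power of $\den(u)\,\den(r)$; since Lemma~\ref{LM:reduced} makes $\den(u)$ split, every non-split factor of $\den(\tilde L(r))$ already divides $\den(r)$, and the shift-freeness (resp.\ $q$-shift-freeness) of $\den(r)$ \wrt~$y$ prevents any nonzero $\si_y$- (resp.\ $\tau_y$-) shift relation among these factors, so the non-split part is not $\eta_y$-spread. When $\eta_y = \delta_y$ (cases $(\si_x, \delta_y)$ and $(\tau_x, \delta_y)$), Lemma~\ref{LM:reduced} makes $u$ split, hence $\tilde L$ has split coefficients and $\den(\tilde L(r))$ divides the least common multiple of the $\xi_x$-shifts of $\den(r)$; as $\den(r)$ is squarefree \wrt~$y$ and distinct $\xi_x$-shifts of a non-split factor are non-associate (else that factor would be split), this least common multiple is squarefree in its non-split part, hence not $\de_y$-spread. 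The genuinely two-dimensional cases are $(\si_x, \tau_y)$ and $(\tau_x, \si_y)$, where both operators are automorphisms and an $x$-shift can conspire with a $y$-($q$-)shift; here Lemma~\ref{LM:reduced} makes both $u$ and $v$ split, so $\tilde L$ again has split coefficients, and, after absorbing the split part of $r$ into $u$ and $v$ so that $\den(r)$ itself is non-split (its freeness being inherited), Lemma~\ref{LM:dqd}---which rests on Lemma~\ref{LM:split}---delivers precisely that the non-split part of $\den(\tilde L(r))$ is not $\si_y$- (resp.\ $\tau_y$-) spread. In every case this contradicts Proposition~\ref{PROP:sum}, so $\den(r)$ must be split and $h_2$ is proper. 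I expect these last two cases to be the main obstacle, since they are exactly where the interaction of the two automorphisms matters and where the reduction to a non-split $\den(r)$ must be carried out carefully before Lemma~\ref{LM:dqd} can be applied.
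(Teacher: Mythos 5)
Your overall architecture is the same as the paper's: reduce to $h_2$ via Lemma~\ref{LM:ns}, use Lemma~\ref{LM:reduced} together with Lemma~\ref{LM:den} to reduce properness to splitness of $\den(r)$, write $L(h_2)=f\,g$ with $g\in\cH(u,v)$ as in Example~\ref{EX:apply}, and play Proposition~\ref{PROP:sum} against the squarefreeness or ($q$-)shift-freeness of $\den(r)$, delegating the $(\si_x,\tau_y)$ and $(\tau_x,\si_y)$ cases to Lemma~\ref{LM:dqd}. The case organization and the choice of auxiliary results all match the paper's proof.

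However, in the four cases you handle directly --- $(\de_x,\si_y)$, $(\de_x,\tau_y)$, $(\si_x,\de_y)$, $(\tau_x,\de_y)$ --- there is a genuine gap: you never establish that $\den(\tilde L(r))$ actually \emph{retains} a non-split irreducible factor. Everything you prove about $\den(\tilde L(r))$ has the form ``its non-split factors, if any, divide $\den(r)$ or its $\xi_x$-shifts, hence cannot satisfy the spread condition.'' But if the sum defining $\tilde L(r)$ cancels all non-split poles --- in particular if $\tilde L(r)=0$, or if $\tilde L(r)\neq 0$ with split denominator --- then the non-split part of $\den(\tilde L(r))$ is trivial, the conclusion of Proposition~\ref{PROP:sum} is vacuous, and no contradiction arises, so nothing prevents $\den(r)$ from being non-split. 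Your parenthetical that the case $\tilde L(r)=0$ admits ``a separate, easy argument'' is not right: $L(h_2)=0=\Delta_y(0)$ is perfectly consistent and yields no information about $h_2$. The missing ingredient is exactly the paper's no-cancellation step. For $\xi_x=\de_x$ one fixes a non-split irreducible $p\mid\den(r)$ of multiplicity $m$ and observes that its multiplicity in $\den\left(e_\rho\de_x^\rho(r)\right)$ is $m+\rho$, strictly larger than in every lower-order term (whose extra denominators are split), so $p$ survives in $\den(f)$. For $\xi_x\in\{\si_x,\tau_x\}$ with $\eta_y=\de_y$ one chooses $p$ so that no positive $\xi_x$-shift of $p$ divides $\den(r)$, whence $\si_x^\rho(p)$ (resp.\ $\tau_x^\rho(p)$) occurs only in the top term and survives in $\den(f)$ with multiplicity one. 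This simultaneously shows $f\neq 0$ and produces the concrete irreducible factor against which spreadness fails; without it the contradiction does not close. (Lemma~\ref{LM:dqd} builds this extremal-shift argument in internally, which is why your two automorphism cases are sound as stated.)
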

\begin{proof}
By the definition of additive decompositions, there exist~$r, u, v \in k(x,y)$
such that~$h_2 \in r \cH(u,v)$, where $r$, $u$ and~$v$ satisfy the properties described in
Definition~\ref{DEF:additive}. Clearly, we may assume that~$h_2$ is nonzero.
Assume further that~$h$ has a telescoper~$L$ of type~$(\pa_x, \pa_y)$.
Then~$L$ is also a telescoper of~$h_2$ by Lemma~\ref{LM:ns}.
Our goal is to show that~$h_2$ is proper.

First, we let~$\eta_y=\delta_y$. By Property~$(a)$ in Definition~\ref{DEF:additive},~$\den(r)$
is squarefree with respect to~$y$, and~$v$ is~$\delta_y$-reduced.
By Lemma~\ref{LM:reduced}~(i), both~$u$ and~$\den(v)$ are split.
Therefore,~$h_2$ is proper if and only if~$\den(r)$ is split by Lemma~\ref{LM:den}~(iv) and~(v).
So it  suffices to show that~$\den(r)$ is split.

\smallskip \noindent
{\em Case 1.1.} Let~$\xi_x = \si_x$. Then we may write
\[ L= \sum_{i=0}^\rho e_iS_x^i \in k(x)\langle S_x\rangle, \quad \mbox{where~$e_0, \ldots, e_\rho \in k(x)$ and~$e_\rho\neq 0$}.
\]
As in Example~\ref{EX:apply}, we have
\[L(h_2) =\underbrace{\left(\sum_{i=0}^\rho u_i \si_x^i\left(r\right)\right)}_{f} g,\]
where~$g \in \cH(u, v) \setminus \{0\}$, ~$u_i = e_i \prod_{j=0}^{i-1} \si_x^j(u)$, and~$u_\rho$ is nonzero.

Suppose that~$\den(r)$ is non-split. Then there exists a non-split irreducible polynomial~$p \in k[x,y]$
such that~$p \mid \den(r)$ and~$\si_x^i(p) \nmid \den(r)$ for all~$i>0$. Moreover,~$p$ has
multiplicity one in~$\den(r)$, because~$\den(r)$ is squarefree. Since all the~$u_i$'s are split,~$\si_x^\rho(p)$
is an irreducible factor of~$\den(f)$ with multiplicity one.
In particular,~$f$ is nonzero.
It follows that the non-split part of~$\den(f)$
is not $\delta_y$-spread.
Since~$L$ is a telescoper of~$h_2$ whose type is~$(S_x, D_y)$, $fg$ is an exact term with respect to~$D_y$.
Note that~$g \in \cH(u, v)$ and~$\den(v)$ is split. We can apply Proposition~\ref{PROP:sum}~(i) to~$L(h_2)$,
which implies that the non-split part of~$\den(f)$ is~$\delta_y$-spread, a contradiction.
Thus,~$\den(r)$ is split.

\smallskip \noindent
{\em Case 1.2.} Let~$\xi_x = \tau_x$. The same argument used in Case~1.1 shows that~$h_2$ is proper.

\medskip
Second, we let~$\eta_y=\si_y$. By Property~$(b)$ in Definition~\ref{DEF:additive},~$\den(r)$
is shift-free with respect to~$\si_y$, and~$v$ is~$\si_y$-reduced.
Both~$\den(u)$ and~$v$ are split if~$\xi_x = \de_x$ by Lemma~\ref{LM:reduced}~(2a).
Both~$u$ and~$v$ are split if~$\xi_x =\tau_x$ by Lemma~\ref{LM:reduced}~(2b).
Therefore,~$h_2$ is proper if and only if~$\den(r)$ is split by Lemma~\ref{LM:den}~(i) and~(vi).
So it  suffices to show that~$\den(r)$ is split.

\smallskip \noindent
{\em Case 2.1.} Let~$\xi_x = \de_x$. Then we may write
\[ L= \sum_{i=0}^\rho e_iD_x^i \in k(x)\langle D_x\rangle, \quad \mbox{where~$e_0, \ldots, e_\rho \in k(x)$ and~$e_\rho\neq 0$}. \]
As in Example~\ref{EX:apply}, we have
\[L(h_2) = \underbrace{\left(e_\rho \de_x^\rho(r) + \sum_{i=0}^{\rho-1} u_i \de_x^i(r)\right)}_{f} g,\]
where~$g \in \cH(u, v) \setminus \{0\}$, $u_i \in k(x, y)$, and~$\den(u_i)$
is split for all~$i$ with~$0 \le i \le \rho-1$.

Suppose that~$\den(r)$ is non-split. Then there exists a non-split irreducible polynomial~$p \in k[x,y]$
such that~$p \mid \den(r)$. Assume that~$m$ is the multiplicity of~$p$ in~$\den(r)$. Then
the multiplicity of~$p$ in~$\den(\delta^i(r))$ is equal to~$m+i$. In particular, its multiplicity
in~$\den\left(e_\rho \delta^\rho(r)\right)$ is equal to~$m+\rho$, which is also the multiplicity of~$p$ in~$\den(f)$,
because all the~$\den(u_i)$ are split.
It follows that~$p$ is an irreducible factor of~$\den(f)$. In particular,~$f$ is nonzero.
Since~$L$ is a telescoper of~$h_2$ whose type is~$(D_x, S_y)$, $fg$ is an exact term with respect to~$S_y$.
Note that~$g \in \cH(u, v)$ and that~$v$ is split. We can apply Proposition~\ref{PROP:sum}~(ii) to~$fg$,
which concludes that the non-split part of~$\den(f)$ is~$\si_y$-spread. Thus, there exists
a nonzero integer~$j$ such that~$\si_y^j(p)$ is an irreducible factor of~$\den(f)$.
From the definitions of~$f$ and splitness of the~$\den(u_i)$,
it follows that~$\si_y^j(p)$ is also an irreducible factor of~$\den(r)$, a contradiction to the fact
that~$\den(r)$ is shift-free. Hence,~$\den(r)$ is split.

\smallskip \noindent
{\em Case 2.2.} Let~$\xi_x = \tau_x$. Then we may write~$L= \sum_{i=0}^\rho e_iT_x^i \in k(x)\langle T_x \rangle$
\[ L= \sum_{i=0}^\rho e_iT_x^i \in k(x)\langle T_x \rangle, \quad \mbox{where~$e_0, \ldots, e_\rho \in k(x)$ and~$e_\rho\neq 0$}. \]
As in Example~\ref{EX:apply}, we have
\[L(h_2) = \underbrace{\left(\sum_{i=0}^\rho u_i \tau_x^i(r)\right)}_{f} g, \]
where~$g \in \cH(u, v) \setminus \{0\}$ and $u_i \in k(x,y)$ is split for all~$i$ with~$0 \le i \le \rho$.

Suppose that~$\den(r)$ is non-split.
By Lemma~\ref{LM:dqd}~(i),~$\den(f)$ is not $\si_y$-spread.
Since~$L$ is a telescoper of~$h_2$ whose type is~$(T_x, S_y)$, $fg$ is an exact term with respect to~$S_y$.
Note that~$g \in \cH(u, v)$ and~$v$ is split. Applying Proposition~\ref{PROP:sum}~(ii) to~$fg$, we see
that~$\den(f)$ is $\si_y$-spread,
a contradiction. Hence,~$\den(r)$ is split.

\medskip
Third, we let~$\eta_y=\tau_y$. By Property~$(c)$ of Definition~\ref{DEF:additive},~$\den(r)$
is $q$-shift-free with respect to~$y$, and~$v$ is~$\tau_y$-reduced.
Both~$\den(u)$ and~$v$ are split if~$\xi_x = \de_x$ by Lemma~\ref{LM:reduced}~(3a).
Both~$u$ and~$v$ are split if~$\xi_x =\tau_x$ by Lemma~\ref{LM:reduced}~(3b).
Therefore,~$h_2$ is proper if and only if~$\den(r)$ is split by Lemma~\ref{LM:den}~(ii) and~(iii).
So it suffices to show that~$\den(r)$ is split.

\smallskip \noindent
{\em Case 3.1.} Let~$\xi_x=\delta_x$. The proof is similar to that in Case~2.1, in which one
applies the third assertion of Proposition~\ref{PROP:sum} instead of the second one.

\smallskip \noindent
{\em Case 3.2.} Let~$\xi_x=\si_x$. The proof is similar that in Case~2.2, in which one
applies the second assertion of Lemma~\ref{LM:dqd} instead of the first one, and
the third assertion of Proposition~\ref{PROP:sum} instead of the second one.
\end{proof}

Combining Theorems~\ref{TH:fund} and~\ref{THM:chartele}, we obtain a criterion for
the existence of telescopers of a mixed term, which is the main result of this article.
\begin{theorem} \label{TH:criterion}
Let~$h$ be a mixed term over~$\mfield$. Assume that
$$h = \Delta_y(h_1) + h_2$$
is an additive decomposition of~$h$. Then~$h$ has a telescoper
of type~$(\pa_x, \pa_y)$ if and only if~$h_2$ is either zero or a proper mixed term.
\end{theorem}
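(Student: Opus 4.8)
The plan is to obtain the equivalence by gluing together the two principal results of this section, using Lemma~\ref{LM:ns} as the bridge that transports telescopability between~$h$ and its residual term~$h_2$. Because the additive decomposition $h=\Delta_y(h_1)+h_2$ is fixed in the hypothesis, no fresh analytic or combinatorial input is needed at this stage: the entire content has already been deposited in Theorems~\ref{TH:fund} and~\ref{THM:chartele}, and what remains is to route it correctly through the definition of a telescoper.

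For the necessity direction, I would simply invoke Theorem~\ref{THM:chartele}: assuming that~$h$ admits a telescoper of type~$(\pa_x,\pa_y)$, that theorem concludes at once that~$h_2$ is either zero or proper. This is the substantive half of the criterion, but it has been discharged already, so here it costs essentially one sentence.

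For the sufficiency direction, I would split on the nature of~$h_2$. If~$h_2=0$, then $h=\Delta_y(h_1)$ is exact with respect to~$\pa_y$, and the remark following Definition~\ref{DEF:exact} yields that the identity operator~$1$ is a telescoper of~$h$. If instead~$h_2$ is proper, and hence a nonzero mixed term, then Theorem~\ref{TH:fund} furnishes a telescoper~$L\in k(x)\langle\pa_x\rangle$ of~$h_2$; Lemma~\ref{LM:ns}, applied to the given decomposition, then guarantees that the very same~$L$ is a telescoper of~$h$. Combining the two directions closes the equivalence.

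I would expect no genuine obstacle at this point, since all the difficulty is concentrated in the two theorems being assembled. The one place that demands a little care is that Lemma~\ref{LM:ns} is stated for mixed terms~$h_2$, which are nonzero by definition, so the exact case~$h_2=0$ must be dispatched on its own rather than funnelled through that lemma---which is precisely why I treat it as a separate branch above.
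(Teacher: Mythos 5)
Your proposal is correct and matches the paper's intent exactly: the paper derives Theorem~\ref{TH:criterion} by simply combining Theorems~\ref{TH:fund} and~\ref{THM:chartele}, with Lemma~\ref{LM:ns} transporting telescopability between~$h$ and~$h_2$. Your careful separate handling of the case~$h_2=0$ is a sensible (and correct) precaution that the paper leaves implicit.
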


\section{Algorithms and examples} \label{SECT:algo}

For a given mixed term, we can decide the existence of telescopers by Theorem~\ref{TH:criterion}.
First, we use the algorithms in~\citep{AbramovPetkovsek2002b}, \citep{GeddesLeLi2004}, and~\citep{ChenHouMu2005} to perform
the respective additive decompositions. Second,
we test whether the denominator of the rational part in the non-exact component is split or not by Remark~\ref{RE:split}.
The decision procedure is given in Figure~\ref{fig:algo}.

\begin{center}
\begin{figure}
\framebox[13.2cm]{
\begin{minipage}{13.1cm} \rule[0.3cm]{0cm}{0cm}
{{Algorithm \textsf{IsTelescopable}}

\noindent \quad{\textbf{Input}}: a mixed term $h\in \cH(a, b)$ over~$\mfield$.

\noindent \quad{\textbf{Output}}:~true, if~$h$ has a telescoper of type~$(\xi_x, \eta_y)$; false, otherwise.

\medskip
\begin{enumerate}
\item Compute an additive decomposition of~$h$ with respect to~$\eta_y$ and get
\[h = \Delta_y(h_1) + rg, \]
where~$g \in \cH(u,v)$ is as given
by~(\ref{EQ:add}--\ref{eq:additive-decomposition}) in
Definition~\ref{DEF:additive}.
\smallskip
\item Compute the primitive part~$p$ of~$\den(r)$ with respect to~$y$.
\smallskip
\item If~$p$ is in~$k[y]$, then return true, otherwise, return false.
\end{enumerate}
}
\end{minipage}
}
\caption{Algorithms for deciding the existence of telescopers.} \label{fig:algo}
\end{figure}
\end{center}

\begin{example}\label{EXAM:notredundant}
It is possible that a mixed term has a telescoper of type~$(S_x, D_y)$ but no
telescoper of type~$(D_x, S_y)$ or~$(D_x, T_y)$. Consider the rational function
\[ h = \frac{1}{(x+y)^2}.\]
Applying Hermite reduction to~$h$ with respect to~$\de_y$ yields
\[h = D_y\left(\frac{-1}{x+y}\right),\]
which implies that~$1$ is a telescoper of type~$(S_x, D_y)$
for~$h$.
Note that~$h = \Delta_y(0) + h$ is an additive decomposition when~$\eta_y=\si_y$
or~$\eta_y=\tau_y$, because~$\den(h)=(x+y)^2$ is both shift-free and $q$-shift-free with
respect to~$y$. But~$h$ is not proper, because~$x+y$ is not split. Hence, $h$~has no telescoper of type~$(D_x, S_y)$ and~$(D_x, T_y)$.
Similarly, consider the rational function
\[h = \frac{1}{(x+y)(x+y+1)}.\]
Since~$h=(S_y-1)(-1/(x+y))$, $1$ is a telescoper of type~$(D_x, S_y)$ for~$h$. However,
$h$ has no telescoper of type~$(S_x, D_y)$ or~$(T_x, D_y)$, because~$(x+y)(x+y+1)$ is squarefree with respect to~$\de_y$ and
it is not split.
\end{example}

As we mentioned before, properness is only a sufficient condition for the existence of telescopers.
The following two examples illustrate this fact.
\begin{example}\label{EXAM:notproper}
Consider the mixed term over~$\left(k(x, y), (\si_x, \de_y)\right)$
\begin{equation}\label{eq:notproper}
h = \frac{-y+2xy+2x^2}{(x+y)^2 x} \cdot y^x \cdot e^{-y}.
\end{equation}
In the structural decomposition~\eqref{eq:notproper} given by $\alpha
= -1$, $\beta = y$, $\gamma = 1$, and $f = h / (y^x e^{-y})$, the
denominator~$(x+y)^2$ is not split.  By case~(iv) in
Lemma~\ref{LM:den}, it follows that~$h$ is not proper.  But $h$~has a
telescoper of type~$(S_x, D_y)$ since it can be decomposed into
\[h = D_y \left(\frac{1}{x+y}  \cdot y^x \cdot e^{-y}\right) +  y^{x-1} \cdot e^{-y},\]
where~$y^{x-1} \cdot e^{-y}$ is proper, because the rational function in
the corresponding structural decomposition is~$1$, and therefore its denominator is split.
\end{example}

\begin{example} \label{EXAM:qnotproper}
Consider the mixed term over~$\left(k(x, y), (\tau_x, \de_y)\right)$
\[h = \frac{y^2 + xy -x}{(x+y)^2 x} \cdot e^{-y}, \]
which is not proper, because, in the structural decomposition given by
$\alpha = -1$, $\beta = 1$, $f = h / e^{-y}$, the
denominator~$(x+y)^2$ is not split. But it has a telescoper of
type~$(T_x, D_y)$ since~$h$ can be decomposed into
\[h = D_y \left(\frac{1}{x+y}\cdot e^{-y}\right) + \frac{e^{-y}}{x},\]
where~$e^{-y}/x$ is proper, because the rational function in
the corresponding structural decomposition is~$1/x$, and therefore its denominator is split.
\end{example}

The last example presents another application of Theorem~\ref{TH:criterion}.
\begin{example}\label{EX:transcendental}
Let
\[f = \frac{1}{y^2 - x}.\]
Note that the denominator of~$f$ is non-split and shift-free with respect to~$\si_y$. By
Theorem~\ref{TH:criterion}, there is no linear differential
operator~$L(x, D_x) \in k(x)\langle D_x\rangle$ and $g \in k(x,y)$ such that $L(x,D_x)(f) = \Delta_y(g)$,
which, together with Proposition~3.1 in~\citep{Hardouin2008} and the descent argument similar to that
given in the proof of Corollary~3.2 in~\citep{Hardouin2008} (or Section 1.2.1
of~\citep{DH2012}), implies that the sum
\[F(x, y) := \sum_{i=1}^{y -1}\frac{1}{{i}^2-x}\quad   \mbox { (satisfying~$S_y(F) - F = f$) }\]
satisfies no polynomial differential equation~$P(x, y, F, D_x(F),D_x^2(F), \ldots ) = 0$.
\end{example}

\bibliographystyle{elsart-harv}

\end{document}